%% file: daggeometrypaths.tex
\documentclass[12pt,a4paper]{article}   

\usepackage[utf8]{inputenc}

\usepackage{subcaption}

\usepackage[UKenglish]{isodate}

\usepackage[a4paper,margin=2cm, marginparwidth=2cm]{geometry}

\usepackage{amsmath,amssymb,amscd}
\numberwithin{equation}{section}

\usepackage{amsthm}
\newtheorem{theorem}{Theorem}[section]

\newtheorem{lemma}[theorem]{Lemma}

\newtheorem{corollary}[theorem]{Corollary}

\newtheorem{definition}{Definition}[section]

\usepackage{graphicx}

\graphicspath{{./figures/}}

\usepackage[dvipsnames]{xcolor} 

\usepackage{float}

\usepackage[round,sort,comma,authoryear]{natbib}

\renewcommand{\harvardurl}[1]{\textbf{URL:} \url{#1}}
\providecommand{\citep}[1]{\cite{#1}}
\providecommand{\citet}[1]{\cite{#1}}

\providecommand{\eqref}[1]{(\ref{#1})}

\newcommand{\figref}[1]{Fig.\ \ref{#1}}
\newcommand{\Figref}[1]{Fig.\ \ref{#1}} 
\newcommand{\tabref}[1]{Table~\ref{#1}}
\newcommand{\secref}[1]{Section~\ref{#1}}
\newcommand{\Secref}[1]{Section~\ref{#1}} 
\newcommand{\appref}[1]{Appendix~\ref{#1}}

\newcommand{\lemref}[1]{Lemma~\ref{#1}}
\newcommand{\defref}[1]{Definition~\ref{#1}}

\usepackage[colorlinks,allcolors=blue]{hyperref}

\providecommand{\url}[1]{\texttt{#1}}

\newcommand{\tsenote}[1]{}
\newcommand{\tsecomment}[1]{}
\newcommand{\tsetodo}[1]{}

\renewcommand{\tsenote}[1]{\textbf{\color{red} (T)}\footnote{\textbf{\color{red} (T)} \color{BrickRed}{#1}}}
\renewcommand{\tsecomment}[1]{\textbf{\color{red} (Start of T)} { \color{BrickRed} #1} {\color{red} (end of T)} }
\renewcommand{\tsetodo}[1]{\textbf{\color{red} (Start of TODO)} { \color{BrickRed} #1} {\color{red} (end of TODO)} }

\newcommand{\tsedef}[1]{\textsc{#1}}

\newcommand{\half}{\frac{1}{2}}
\newcommand{\bea}{\begin{eqnarray}}
\newcommand{\eea}{\end{eqnarray}}
\newcommand{\beq}{\begin{equation}}
\newcommand{\eeq}{\end{equation}}

\newcommand{\tsevec}[1]{\mathbf{#1}}

\newcommand{\xvec}{\tsevec{x}}
\newcommand{\yvec}{\tsevec{y}}

\newcommand{\Rbb}{\mathbb{R}}

\newcommand{\Ccal}{\mathcal{C}}
\newcommand{\Dcal}{\mathcal{D}}
\newcommand{\Ecal}{\mathcal{E}}
\newcommand{\Gcal}{\mathcal{G}}

\newcommand{\Pcal}{\mathcal{P}}
\newcommand{\Rcal}{\mathcal{R}}

\newcommand{\Vcal}{\mathcal{V}}

\newcommand{\Xcal}{\mathcal{X}}

\newcommand{\dedge}[2]{(#1 \rightarrow #2)} 
\newcommand{\kout}{k^\mathrm{(out)}}
\newcommand{\Ntilde}{\tilde{N}}
\newcommand{\Nbox}{N}
\newcommand{\Ntotal}{\tilde{N}}

\newcommand{\Lhop}{L^\mathrm{(hop)}}
\newcommand{\Lweight}{L^\mathrm{(W)}}

\newcommand{\tseexpect}[1]{\langle #1 \rangle}

\newcommand{\gammamin}{\gamma^\mathrm{(min)}}
\newcommand{\gammamax}{\gamma^\mathrm{(max)}}

\newcommand{\Lp}{L\textsubscript{p}}
\newcommand{\geolength}{L_p(\gamma_\mathrm{O})}

\newcommand{\pointset}{\Xcal}
\newcommand{\nodeset}{\Vcal}
\newcommand{\pathset}{\Pcal}
\newcommand{\walkseq}{W}

\DeclareMathOperator*{\argmax}{arg\,max}
\DeclareMathOperator*{\argmin}{arg\,min}

\begin{document}

\renewcommand{\thefootnote}{\fnsymbol{footnote}}

\begin{center}
 {\Large\textbf{A Geometric Principle for Path Selection}}
 \\
 {\Large\textbf{in Directed Acyclic Graphs}}
 \\[\baselineskip]
 H.\ AlMuhanna\textsuperscript{1,2}\footnote{ORCID:  \href{http://orcid.org/0009-0004-2564-0140}{\texttt{0009-0004-2564-0140}} }, 
 \href{http://www.imperial.ac.uk/people/t.evans}{T.S.\ Evans}\textsuperscript{1,2}\footnote{ORCID:  \href{http://orcid.org/0000-0003-3501-6486}{\texttt{0000-0003-3501-6486}} }
 \\[0.5\baselineskip]
 1.\ \href{http://complexity.org.uk/}{Centre for Complexity Science}, Imperial College London, SW7 2AZ, U.K.
 \\
 2.\ \href{http://www3.imperial.ac.uk/theoreticalphysics}{Abdus Salam Centre for Theoretical Physics},  Imperial College London, SW7 2AZ, U.K.
 \\[0.5\baselineskip]
 9\textsuperscript{th} October 2026

\end{center}

\begin{abstract}
There are many types of path between nodes of a Directed acyclic graph (DAG), including the shortest and longest paths. Here, we ask whether the choice of path used in any analysis reflects the geometry of the latent space in which the DAG is embedded. To do this we define \Lp{}-cube spacetimes, a family of continuous ordered spaces in which a single parameter $p$ determines whether the ordered distance satisfies the ordered triangle identity when $p>1$, or the reverse ordered triangle identity when $0<p<1$ or $p<0$, both of which are rarely studied. The forward connection kernel in the $p<0$ case has infinite hyper-volume, yet it is a well defined model.
We then construct random geometric DAGs embedded in our \Lp{}-cube spacetime and study their geometry. 
For $p>1$, where the ordered triangle identity holds, the shortest paths provide the closest approximation to the continuum ordered geodesic while for $0<p<1$ and $p<0$, where the reverse ordered triangle identity holds, this role is instead taken by the longest weighted path. We also test the robustness of this relationship in a geometric citation model in which the random geometric DAG is supplemented by a preferential-attachment-type mechanism allowing connections beyond the strict geometric cut-off. The same broad correspondence between latent geometry and ordered geodesic path type is recovered. Our results show that the preference for shortest or longest paths in a DAG is closely connected to the geometry assumed for its latent space, and suggest that path structure can provide information about latent geometry, even when that geometry is not directly observed.
\end{abstract}

\setcounter{footnote}{0}
\renewcommand{\thefootnote}{\arabic{footnote}}

\section{Introduction} \label{s:introduction}

In directed acyclic graphs (DAGs), there are many types of paths between two nodes which can be used to analyse the network. In citation networks, for example, longest paths and related constructions such as main-path analysis identify extended innovation trajectories through the network \citep{HD89,HPEO24,PE25}, while the intermediacy measure \citep{SWTE20} favours shortest paths in one limit. 
However, there is no single principled reason to prefer one path notion over another in general. 

In this work, we take a different perspective: rather than asking which path is intrinsically ``most meaningful'', we ask whether the preference for shortest or longest paths reflects a choice of distinct types of geometry for the DAG.
The motivation for this question comes from the relationship between networks and latent geometry. In machine learning it is often useful to embed data in a continuous space equipped with a notion of distance, typically a metric space such as Euclidean or hyperbolic space \citep[for example][]{MSCCD13,BUGWY13,NK17,KPKVB10,HRH02}. These spaces, however, have no intrinsic sense of order: there is no sense of moving forward or backwards, no natural partial order on their points. DAGs are different. A DAG defines a partial order on its nodes, and in a continuous-space description this is analogous to introducing a time coordinate which distinguishes past from future. Temporal and spatial coordinates can therefore be treated together within a spacetime description. Embedding a DAG in such a continuous ordered space allows geometric methods to be used to study its structure. 
In our study we develop a new ordered space, the \Lp{}-cube spacetime, as a family of continuous spaces whose geometry is controlled by a single parameter. We then create DAGs embedded in this space. By understanding the relationship between geometry and network paths, we can ask what may be inferred when a DAG is observed without any explicit latent space.

There is an established literature on DAGs embedded in ordered continuous spaces. Some of this work is motivated by relativity and cosmology, particularly the causal-set approach to quantum gravity \citep{M78,BLMS87,BB91,BG91,D05,D06a,RW09,S19b}. Related spacetime constructions have also been applied to citation networks \citep{CGLE14,CE16,CE17} or developed for DAGs in general \citep{SKLDB25}. Minkowski and de Sitter spacetimes provide natural examples of continuous spaces with a distinction between temporal and spatial directions. They also come equipped with a natural measure of separation, proper time. However, these highly symmetric spacetimes are more restrictive than we require here. We therefore construct a different ordered space, the \Lp{}-cube spacetime, which allows us to vary the geometric properties of the ordered distance within a single framework. This makes it possible to study both ordinary and reverse triangle identities, and hence geometries in which either shortest or longest paths play the role of the continuum ordered geodesic.

The paper starts in \secref{s:Lpcsmodel} where we introduce the \Lp{}-cube spacetime and we define how we measure distances and define geodesics in such a space. In \secref{s:dagst} we define the DAGs embedded in this space and the paths we use to analyse these DAGs. \Secref{s:results} gives the results of the comparisons we make between the different paths in the DAGs and the geodesics of the \Lp{}-cube space model. We conclude with a discussion of the implications of these results and possible extensions in \Secref{s:conclusion}.

\section{\Lp{}-cube spacetime}\label{s:Lpcsmodel}

\subsection{Cube-space rule}\label{s:csrule}

We start from the work of \citet{BB91} on \tsedef{cube spacetime} which is the set of points given by $\Rbb^D$ and a partial order $\trianglelefteq$. The partial-order encodes the direction in cube spacetime and this is  provided by the \tsedef{cube-space rule}. 
Two points $x$ and $y$ lying in $\Rbb^d$ are only connected with $y$ coming \emph{later} than $x$, denoted as $x \trianglelefteq y$, if \emph{every} coordinate of $x$ is less than or equal to the corresponding coordinate of $y$. 
That is
\begin{equation}
	x \trianglelefteq y 
	\quad \Leftrightarrow \quad 
	x_i \leq y_i \quad \forall \; i \in \{1,2,\ldots,D\} 
	\, ,
	\;\; \forall \; x,y \in \Rbb^D \, .
	\quad
	\text{(cube-space rule)}
	\label{e:cubespacerule0}
\end{equation}
This $\trianglelefteq$ defines a partial order (see \appref{as:partialorder}) on cube-space. 

This cube-space rule generalises the notion of causality that lies behind the existence of a time coordinate and so an order in the points of traditional spacetimes. 
Note that any point $x$ is ``earlier'' than a point $y$, that is $x \trianglelefteq y$, only if $y$ lies in a hypercube with one corner at $x$, hence the name of the space and the rule.

\subsection{Distance}\label{s:Lpcsdistance}

We also need a measure of the separation $d(x,y)$ between two points $x,y \in \Rbb^D$ in our cube spacetime.
The reason is that we only want to connect to nodes which are less than a distance $R$ apart where $R$ will be one of our model parameters. This is often the only criteria used in traditional random geometric graph models \citep{G61,P03, DC02} but it is rarely imposed when creating DAGs. 

With normal geometry, Euclidean or more general Riemannian spaces, the space comes with a measure of distance which satisfies the criteria to be a distance in the formal mathematical sense (\defref{d:distance} in \appref{as:distance}). These distance measures often satisfy the triangle identity making them a metric in the formal mathematical sense (\defref{d:metric} in \appref{as:distance}). 
However, we are more interested in spacetime geometries with order, such as Minkowski spacetime and de Sitter spacetime. In these examples, the separation of points is given by the proper-time $\tau$ which is neither a distance nor a metric in the usual formal sense (see \appref{as:propertime}).

We choose to work with a more general definition of distance, one often used with data, the \tsedef{Minkowski distance}\footnote{Note that this distance is unrelated to distances used in Minkowski spacetime as described in \appref{as:propertime}.}, often referred to as the p-norm or the \Lp{}-norm. This measures the separation $\|x-y\|_p$ between two points $x$ and $y$, each with $D$ real coordinates $x_i$ and $y_i$  ($i\ \in\{1,2,\ldots,D\}$), as
\begin{equation}
	\|x-y\|_p 
	= \Big(\sum_{i=1}^D |x_i-y_i|^p\Big)^{1/p}  \, .
	\label{e:Lpdef0}
\end{equation}
Note we always take the positive real root so $\|x-y\|_p  \geq 0$.
With $p\geq 1$, this Minkowski distance $\|x-y\|_p$ satisfies the formal conditions for a distance (\defref{d:distance} in \appref{as:distance}). Further, it also satisfies the criteria to be a metric (\defref{d:metric} in \appref{as:distance}) as it also satisfies the triangle inequality. For instance, $p=2$ is the familiar Euclidean  distance measure which has special properties; it is invariant under rotations and translations. On the other hand, the \Lp{} metric is used for other values of $p \geq 1$ in data science when there is often no natural rotational symmetry in the data \cite{DD09}.

However, given the partial-order on our cube-space means that we have no use for the Minkowski distances $\|x-y\|_p$ except when $x \trianglelefteq y$. 
So we define what we call an \tsedef{ordered distance}{\footnote{This is very similar to, but not identical to, the \tsedef{time separation function} in Definition 2.8 of \citet{KS18} used for work on \tsedef{Lorentzian length spaces}.}  $d_p(x,y)$ between two points $x$ and $y$  where
\beq
	d_p(x,y) 
	=
	\begin{cases}
		\Big(\sum_{i=1}^D (y_i-x_i)^p\Big)^{1/p} &
		\;\; \text{if} \; x \trianglelefteq y \, ,
		\\
		\text{undefined} & \;\; \text{otherwise} . 
	\end{cases} 
	\label{e:dpdef}
\eeq
This defines what we call our \tsedef{\Lp{}-cube spacetime}, a triple $(\Rbb^D, \trianglelefteq, d_p)$ with points taken from $\Rbb^D$, with a partial order $\trianglelefteq$ defined using the cube-space rule, and an ordered distance $d_p$.  
We set out our formal framework for distance measures in spaces with order in \appref{as:odistancemetric}. 

The main reason for us working with this definition of distance $d_p(x,y)$ comes when we consider the triangle identities. For $p\geq 1$, our distance $d_p(x,y)$ satisfies the \tsedef{ordered triangle identity} \citep[see pp.~30--31 of][]{HLP52} that  
\beq
	d_p(x,z) \leq d_p(x,y) + d_p(y,z) 
	\, ,
	\quad  \text{where} \quad 
	x \trianglelefteq y \trianglelefteq z \, .
	\label{e:omtridef}
\eeq
This is weaker than the usual triangle identity for a normal metric since in this case the inequality only needs to hold for one permutation of the points $x$, $y$ and $z$. For instance we do not need $d_p(x,z) + d_p(y,z)$ to have any relation to $d_p(x,y)$. 
So for $p\geq 1$ the ordered distance $d_p(x,y)$ of \eqref{e:dpdef} is an \tsedef{ordered metric} following the definitions of \appref{as:odistancemetric}.

The primary result we want to exploit in our work is the fact that this ordered distance $d_p(x,y)$ of \eqref{e:dpdef} for both $0<p\leq1$ and $p<0$ also satisfies the \tsedef{reverse ordered triangle identity} \citep[see pp.~30--31 of][]{HLP52}
\beq
	d_p(x,z) \geq d_p(x,y) + d_p(y,z) 
	\, ,
	\quad  \text{where} \quad 
	x  \trianglelefteq y \trianglelefteq z \, .
	\label{e:ormtridef}
\eeq
That is for $p\leq1$ (excluding $p=0$) the ordered distance $d_p(x,y)$ of \eqref{e:dpdef} is an \tsedef{ordered reverse metric} from the definitions of \appref{as:odistancemetric}. This regime mimics the properties we have for the spacetimes used in relativity, such as Minkowski spacetime, since the proper time provides an example of an ordered reverse metric, see \eqref{ae:ormtridef} in \appref{as:odistancemetric}.

One point worth noting is that the case of $p=1$ is rather special. If two points are connected, say $x  \trianglelefteq y$, then all paths between those two points have the same length \citep[see pp.~30--31 of][]{HLP52}. This also means that the triangle inequalities are always satisfied with exact equality, that is $d_{p=1}(x,z) = d_{p=1}(x,y) + d_{p=1}(y,z)$ if $x  \trianglelefteq y \trianglelefteq z$.

\subsection{Geodesics}\label{s:Lpcsgeodesic}

Given the ordered distance $d_p$ of \eqref{e:dpdef} we can define the ordered geodesic as paths (a formal definition of a path is given in \secref{s:path} below) in this space which are extremal in length (see \appref{as:pggd} for a longer discussion of geodesics). The ordered geodesic in our \Lp{}-cube spacetime from the source node $s_i=0$  to the  sink $t_i=1$ is the diagonal $\gamma_\mathrm{O}  = \{ x | x_i = \lambda, 0 \leq \lambda \leq 1\}$. This is the unique ordered geodesic between these points except when $p=1$ when all paths are ordered geodesics. So we will use the diagonal as the reference path in our analysis below. 

The idea is that when we look at different paths in the DAG embedded in \Lp{}-cube spacetime, the ordered geodesic represents the most important path between two points in the continuous space, so we are asking what path in a DAG represents the `best' path in the continuum space in which the DAG is embedded. Choosing the diagonal as the best path between two points is a modelling assumption, but one that can be justified by the importance of ordered geodesics in several disciplines: in the formal mathematics of geometry, in physics (e.g.\ mechanics), computer science \citep[e.g.\ scheduling][]{KW59} and even social science \citep[e.g.\ bibliometrics in][]{HPEO24}.

\section{DAGs embedded in spacetime} \label{s:dagst}

In this section we look at a simple discrete model that creates a random DAG embedded in the continuous \Lp{}-cube spacetime.

\subsection{Adding order to random geometric graphs}

A useful baseline for studying networks with an underlying geometry is provided by random geometric graphs (RGGs). In these models, nodes of a network are placed at random in a continuous space and edges are added between nearby nodes. Most examples are for the case where the density of points is constant on average and where the space is Euclidean space \citep[for example][]{G61,P03, DC02}.
Equally, most existing work is for undirected networks with many cycles where there is no explicit sense of order in the network.

However, we are focusing on directed networks with no cycles, \tsedef{directed acyclic graphs} (DAGs). Such networks have an explicit sense of order since if you follow a walk in network, moving from node to node along an edge in the direction of that edge, you can never return to any node you pass through; there are no cycles. This lack of a return option means that every node in a DAG has a sense of direction, so following an edge is moving ``forward'' consistently. 

More formally, consider a directed network $\Gcal=(\Vcal,\Ecal)$ where $\Vcal$ is the set of nodes and $\Ecal$ is the set of directed edges. 
Then a \tsedef{path} $P$ in a directed network is a sequence of nodes $P = (v^{(a)})_{a=0}^\ell = (v^{(0)} \rightarrow v^{(1)} \rightarrow \ldots \rightarrow v^{(\ell)})$ in which consecutive pairs of nodes form directed edges in the correct orientation, that is $\dedge{v^{(a-1)}}{v^{(a)}} \in \mathcal{E}$ for all $a = 1, \ldots, \ell$.

The order present in a DAG is defined as $u \preceq v$ whenever there is a path from node $u$ to node $v$. That is 
\begin{equation}
    \forall \; u,v \in \Vcal \quad 
    \begin{cases}
    	u     \preceq v & \Leftrightarrow \quad  \text{a path from node }u\text{ to node }v\text{ exists,} \\
    	u \not\preceq v & \Leftrightarrow \quad \text{no path from node }u\text{ to node }v\text{ exists.}
    \end{cases}
    \label{e:preceqdef}
\end{equation}
The lack of cycles in a DAG ensures this is a `partial order' as defined in \appref{as:partialorder}.

So creating a random geometric DAG means we need to look beyond the vast majority of work on random geometric graphs and we have to start by working with a continuous space with a sense of order. As we reviewed in \secref{s:introduction}, the spacetimes used in relativity such as Minkowski space, have been used before \citep[for example][]{BB91,RW09,CGLE14,CE16,CE17} but these are too limited for our needs here. So instead we will use our \Lp{}-cube model defined in \secref{s:Lpcsmodel} with the cube-space rule of \eqref{e:cubespacerule0} providing the order.

\subsection{Limiting the length of edges}

Our second condition for an edge to exist in our random geometric DAG is to emulate the short distance criteria used in traditional random geometric graph models. Typically edges are only created between points less than $R$ apart, where $R$ is a positive real parameter of the model. Such a short length condition is rarely imposed when creating random DAGs embedded in spacetime; for example \citet{BB91,RW09,CGLE14,CE16,CE17} impose no such limit.

\subsection{The RGDAG model} \label{s:rgdag}

Using the cube spacetime rule and the length cutoff to define the directed edges between randomly placed  points gives us our \tsedef{random geometric DAG model} (RGDAG), a network embedded in our \Lp{}-cube spacetime. 

We start by placing $\Nbox$ nodes in the $D$-dimensional hypercube $[0,1]^D$.  Each of these nodes $v$ is given $D$ coordinates from the interval $[0,1]$, denoted as $v_i \in [0,1)$ with the index $i$ running over the $D$ coordinates, $i \in \{1, \ldots, D\}$. 
We chose the value of each coordinate $v_i$ uniformly at random from the interval\footnote{Since the choice of coordinates of each point is made independently of the coordinates of other points, this is very like a Poisson Point Process (PPP). The difference is that we fix the total number of points in the model rather than letting it be a random variable drawn from a Poisson distribution. However, for large $\Nbox$ the two processes will become indistinguishable. Also note that numerically, our coordinates are chosen from $[0,1)$ so that the value $1.0$ is excluded but this has no noticeable effect in our simulations.} $[0,1]$.  
In addition we add a global source $s_i=0$ and a global sink $t_i=1$ to give a total of $\Ntilde=(\Nbox+2)$ nodes.

Adding the edges is done using the cube-space rule of \eqref{e:cubespacerule0} and the ordered distance $d_p(x,y)$ of \eqref{e:dpdef}. Two nodes $u$ and $v$ are connected by a directed edge $\dedge{u}{v}$ from $u$ to $v$ if and only if
\begin{equation}
	\dedge{u}{v}
	\; \Leftrightarrow \;
	\begin{cases}	
		u_i \leq v_i \;\; \forall \; i \in \{1,2,\ldots,D\}, & \text{(cube-space rule),}
		\\
		0 < \Big(\sum_{i=1}^D (v_i-u_i)^p \Big)^{1/p} < R 
		\, , \;\; p \neq 0 \, ,  R>0 \, ,
	 	& \text{(distance cutoff).}
	\end{cases}	
\end{equation}
See for example \figref{fig:radius-square-comparison}. So the RGDAG model has two real parameters $p$ and $R$ chosen such that $R>0$ and $p \neq 0$. 

\begin{figure}[H]
  \centering
  \includegraphics[width=\textwidth]{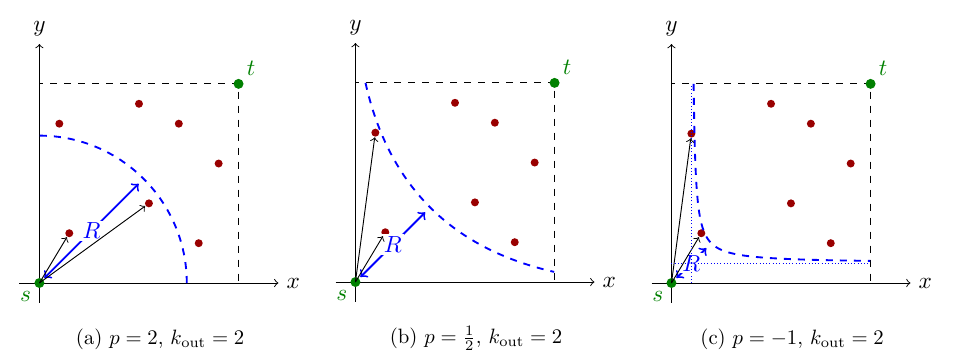}
  \caption{Illustration of the random geometric DAG construction in the unit square $[0,1]^2$ for three geometries. Nodes are scattered in the square, with a source node $s$ at the origin $(0,0)$ and a sink node $t$ at $(1,1)$. Directed edges are added from $s$ to all nodes lying within distance $R$ and with coordinates that are monotonically increasing in every dimension. Panel (a) shows the $p=2$ case, where the cutoff boundary is circular. Panel (b) shows the $p={1/2}$ case, where the cutoff boundary is given by the corresponding $p={1/2}$ curve. Panel (c) shows the $p={-1}$ case, where the forward connection region is unbounded.
  	  }
  \label{fig:radius-square-comparison}
\end{figure}
In addition, \tabref{t:dpproperties} summarises some of the properties of these spaces.

\begin{table}[H]
	\newlength{\tsewidth}
	\settowidth{\tsewidth}{$\triangle$ Identity}
	\centering
	\begin{tabular}{rl||ccc}
		\multicolumn{2}{l||}{Spacetime}      
		& \parbox[t]{\tsewidth}{\centering Curvature Unit Ball} 
		& \parbox[t]{\tsewidth}{\centering Ordered  $\triangle$ Identity} 
		& \parbox[t]{\tsewidth}{\centering Volume $|\Xcal^+(x,R)|$} 
		\\ \hline\hline
		\Lp{}-cube & $p > 1 $           & Convex  & Normal  & Finite \\
		\Lp{}-cube & $p =1 $            & Flat    & Equality& Finite \\
		\Lp{}-cube & $0 < p < 1 $       & Concave & Reverse & Finite \\
		\Lp{}-cube &$p < 0 $            & Concave & Reverse & $\infty$ \\\hline
		\multicolumn{2}{l||}{Minkowski} & Concave & Reverse & $\infty$ \\
	\end{tabular}
	\caption{Properties of various spacetimes. The curvature of the unit ball as viewed from the origin is given in each case. This is correlated with the ordered triangle identity, defined in \eqref{e:omtridef} for the normal case and in \eqref{e:ormtridef} for the reverse case. The forward connection kernel $\Xcal^+(x,R)$ is the region of space $\Xcal$ containing the points that can be connected to from a point $x$ when the maximum separation is $R$. The formula for the hyper-volume $|\Xcal^+(x,R)|$ is discussed in \appref{as:kout}. For Minkowski spacetime, the separation is measured using proper time $\tau$ of \eqref{e:propertimemst} and $R$ is the largest proper-time allowed between points.}
	\label{t:dpproperties}
\end{table}

At this stage, we have a DAG $\Dcal^\mathrm{(rgg)}$ with $\Ntotal$ nodes. However, not all nodes will be connected to the source and/or sink by at least one path, so this DAG $\Dcal^\mathrm{(rgg)}$ is not always an interval. So the last step is to find the interval DAG $\Dcal^\mathrm{(int)}(s,t)$ which is the subgraph where every point lies on at least one path from $s$ to $t$.

\subsection{Adding Heavy-Tailed Structure to Geometric DAGs}\label{s:citation-model}

Random geometric DAGs provide a useful baseline model because they allow us to isolate the role of the underlying geometry. However, they are also quite abstract. Real DAGs, such as citation networks and innovation networks, often have additional structural features which are not present in the RGDAG model, including heterogeneous degrees, clustered topic structure, and heavy-tailed degree structure \citep{P65b, P76, R98}. Our aim here is to look at DAGs produced in a model that retain some of the geometric structure of our RGDAG model but at the same time have a heavy-tailed degree structure so that they are more like DAGs found in real data. We will use the context of citation networks as the motivation for our model and draw on work to produce realistic models in that context.

We start with the DAG of our RGDAG model of \secref{s:rgdag} embedded in \Lp{}-cube spacetime. We proceed as before with $\Nbox$ nodes, each representing a paper with coordinates $v_i=(v_1,v_2)\in[0,1]^2$, each coordinate chosen uniformly at random between zero and one. 

However, for the citation network context, we view each paper $v$ in terms of two other coordinates: the \tsedef{time coordinate} $\tau(v)$ and \tsedef{topic coordinate} $\mu(v)$ where
\begin{equation}
	\tau(v) := \frac{v_1+v_2}{\sqrt 2} \, ,
	\qquad
	\mu(v) := \frac{v_1-v_2}{\sqrt 2} 
	\, .
	\label{eq:taumu}
\end{equation}
The $\tau(v)$ can be thought of as a publication date for each paper $v$ and we will ensure that there is an edge from node $u$ to node $v$ only if $\tau(u)<\tau(v)$.
The $R$ parameter effectively controls how far back a paper can cite and how close topic-wise it is to its bibliography.
The $\mu$ coordinate then represents some topic space. That is we will want papers with similar $\mu$ values to have a much higher chance of being connected as this represents documents writing on similar topics. Yet, time also plays a role. So two papers having the same $\tau$ coordinate but different $\mu$ coordinates are ``published'' at the same time so can never cite each other however close they are in terms of their topic coordinate $\mu$.
The edges between nodes and the citations between documents, will be constructed in two stages.

The first set of edges are going to represent what we call \tsedef{primary references}. For a each paper $v$, we add edges that represent entries in the bibliography of $v$ and which are recent papers in fields close to the main topic of paper $v$. This is done using the same \Lp{}-cube space rule used in our RGDAG model. 
That is we create an edge from any primary reference node $r$ to our current paper $v$ where $d_p(r,v) < R$ which ensures that $\tau(r)<\tau(v)$. The $R$ parameter of our model now sets the expected number of primary references per node, $\tseexpect{k_{r,\text{out}}}$, not the total number of references. 

Once we have the set of primary references of the node $v$, say $\Rcal(v) = \{r | d_p(r,v)<R\}$, we now add what we call \tsedef{secondary references}. To do this we look at every primary reference of $v$, say $r \in \Rcal(v)$ and we look at the references of those nodes which we call \tsedef{secondary references}, $r^\prime \in \Rcal(r)$.
With probability $\phi$, another parameter of this model, we add each secondary reference to the bibliography of node $v$, that is we add an edge $\dedge{r'}{v}$. 
Each secondary reference $r'$ has a path of hop length two, $(r' \rightarrow  r \rightarrow v)$ to the node $v$.
So the final expected degree is $(1+\phi)$ times the expected number of primary references per node as set by $R$. 
Essentially this second step mimics a process in which the authors of a new paper $v$ read recent papers, the primary references, and these form the core of their bibliography. They also read these primary references and read/copy some of the references, our secondary references, to the bibliography of the new paper $v$.  This copying mechanism is well known to generate fat-tailed degree distributions, both in terms of generic network modelling and citation networks \citep{KRRSTU00,V01a,ES05,GAE15a, KR05, SR03, PSGMC20}.

\begin{figure}[H] 
	\centering
	\includegraphics[width=0.5\linewidth]{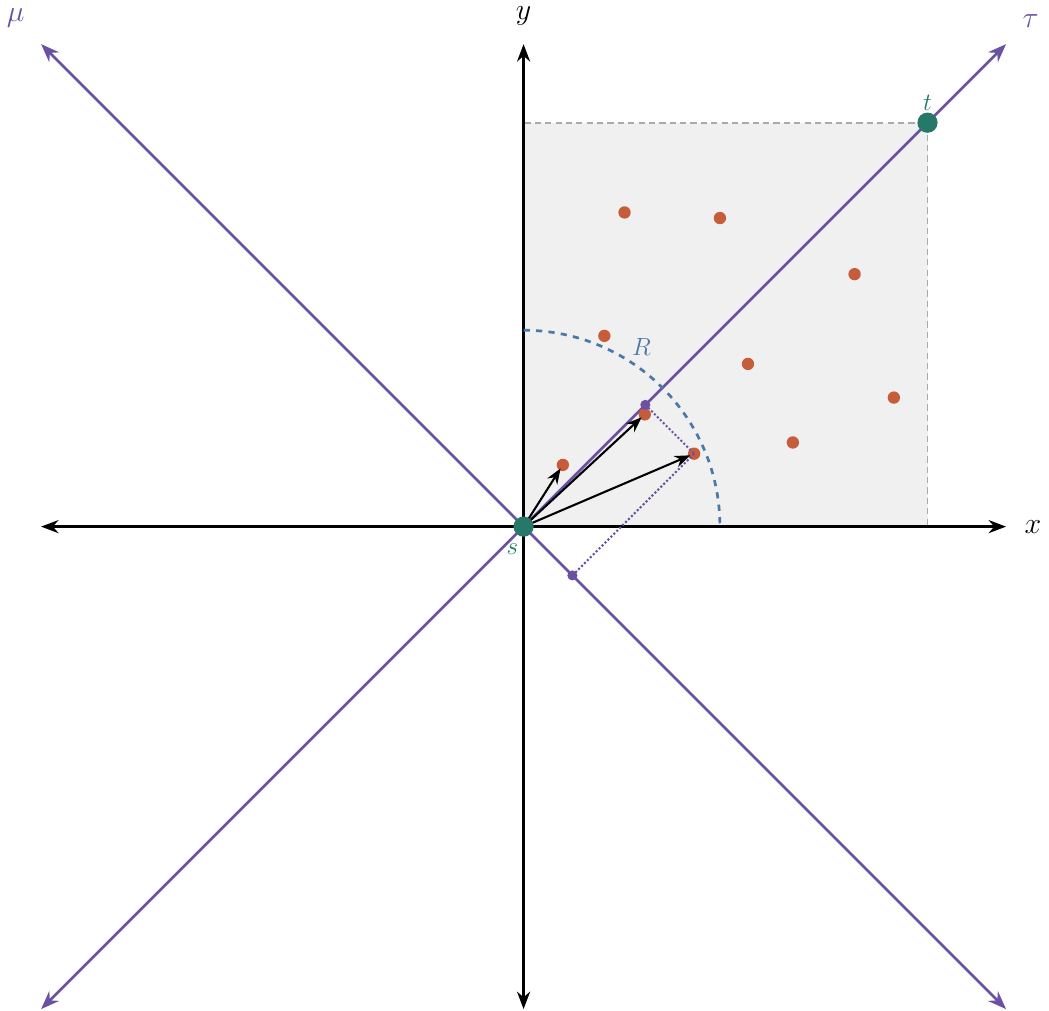}
	\caption{Illustration of the two-dimensional latent time--topic space. Nodes are placed in the original unit square, with global source $s=(0,0)$ and sink $t=(1,1)$. The rotated time axis $\tau$ lies along the source--sink diagonal and defines the temporal ordering, while the orthogonal axis $\mu$ represents the topic direction. The dashed blue curve shows the distance cutoff $R$ from the source, illustrated for $p=2$, and the black arrows indicate admissible forward connections. The dotted purple lines shows the decomposition of a node's displacement into its $\tau$ and $\mu$ coordinates.
	}
	\label{PARGDAG}
\end{figure}

\subsection{Paths in DAGs}\label{s:path}

Consider a directed network $\Gcal=(\Vcal,\Ecal)$ where $\Vcal$ is the set of nodes and $\Ecal$ is the set of directed edges. A path $P$ is a sequence of nodes $P = (v^{(a)})_{a=0}^\ell = (v^{(0)} \rightarrow v^{(1)} \rightarrow  \ldots \rightarrow  v^{(\ell)})$ in which consecutive pairs of nodes form directed edges in the correct orientation, that is $\dedge{v^{(a-1)}}{v^{(a)}} \in \mathcal{E}$ for all $a = 1, \ldots, \ell$.

The \tsedef{length}  $L(P)$ of a path $P$ is simply the sum of the lengths of each edge in the path.
\begin{equation}
	L(P) 
	= 
	L\big( (v^{(a)})_{a=0}^\ell \big) 
	= 
	\sum_{a=1}^{\ell} L\big(\dedge{v^{(a-1)}}{ v^{(a)}}\big)
	\, .
	\label{e:Ldef}
\end{equation}
We will study two different ways to assign a {length} to each edge and hence to any path. First we define the \tsedef{hop length}  $\Lhop(P)$ of a path $P$ to be the number of edges in the path, the number of nodes minus one. So the hop length of each edge is equal to one, $L(e)=1 \; \forall \; e \in \Ecal$ and 
\begin{equation}
	\Lhop(P) 
	= 
	L^\mathrm{(hop)} \big( (v^{(a)})_{a=0}^\ell \big) 
	= 
	\sum_{a=1}^{\ell} 1
	= 
	\ell \, .
	\label{e:Lhopdef}
\end{equation}

Our second measure of length assigns each edge $\dedge{u}{v}$ to have a length (or weight) equal to the ordered distance $d_p(u,v)$ of \eqref{e:dpdef} between the head and tail nodes of that edge. We define the \tsedef{weighted length} $\Lweight(P)$ of a path $P$ to be the sum of these ordered distance edge lengths
\begin{equation}
	\Lweight(P) = \Lweight\big( (v^{(a)})_{a=0}^\ell \big) = \sum_{a=1}^{\ell} d_p(v^{(a-1)},v^{(a)}) \, .
	\label{e:LWdef}
\end{equation}

In many network settings, the geodesic in a network is identified with the shortest path, especially in undirected and unweighted graphs, see ch.10 in \citet{C21} for example. This convention is natural in many contexts such as when asking how many steps separate two individuals in a social network (``six degrees of separation'') or finding the fastest route between two locations. This emphasis on short paths is central to the theory of small-world networks, where rapid reachability between otherwise distant nodes is one of the defining features. So one path we will study is the hop shortest path $P^{(\mathrm{hop})}_{\min}$ from $s$ 
to $t$, a path from global source $s$ to global sink $t$ with the least number of edges, so formally
\begin{equation}
    P^{(\mathrm{hop})}_{\min} = \argmin_{P \in \mathcal{P}(s,t)} \Lhop(P) \, .
    \label{e:Phopmin} 
\end{equation}
Here $\mathcal{P}(s,t)$ is the set of all directed paths from source $s$ to sink $t$.
Likewise, we will study the shortest weighted path $P^{(W)}_{\min}$ where
\begin{equation}
    P^{(W)}_{\min} = \argmin_{P \in \mathcal{P}(s,t)} L^{(W)}(P) \, .
    \label{e:PWmin}
\end{equation} 

For directed acyclic graphs, however, the focus on shortest paths alone is no longer justified. A DAG carries an intrinsic ordering, and paths are not merely routes between two points; they can represent ordered chains of dependency, influence, or causal precedence. In such settings, the most informative path is not necessarily the shortest one. In scheduling problems, the longest path gives the minimum time needed for a process \citep{CG72,LTW63,LTW63a}. In studies of quantum gravity, the longest path of a DAG embedded in a physical spacetime represents the best approximation to a geodesic \citep{BB91,BG91,RW09}. In an innovation network, for example, if a scientific paper enables a patent, which later enables a technology, which eventually contributes to a vaccine, then the intermediate steps are not redundant, they are part of the causal chain by which the final object comes into existence \citep{HPEO24,HPEO24a,PE25}. Citation networks provide a similar example, where long paths can represent extended intellectual lineages rather than inefficient detours \citep{CGLE14,CE16,CE17,ECV20}. 

So with this in mind we will also look at the paths from the global source $s$ to the global sink $t$ which have the longest hop length from 
$P^{(\mathrm{hop})}_{\max}$ and the longest weighted path $P^{(W)}_{\max}$, where
\begin{eqnarray}
    P^{(\mathrm{hop})}_{\max} 
    &=& 
    \argmax_{P \in \mathcal{P}(s,t)} \Lhop(P) \,,
    \label{e:Phopmax} 
    \\
    P^{(W)}_{\max} 
    &=&
     \argmax_{P \in \mathcal{P}(s,t)} L^{(W)}(P) \, .
     \label{e:PWmax}
\end{eqnarray}

Note that in a network there are paths which satisfy the criteria of an ordered geodesic as defined in \appref{as:odistancemetric}. In a DAG, the shortest and longest paths using any given measure of edge length always define ordered geodesics, minimal ordered geodesics and maximal order geodesics respectively. So when we make comparisons between the ordered geodesic in the continuous \Lp{}-cube spacetime and the shortest/longest hop/weighted paths in the embedded DAG, we are making comparisons between ordered geodesics of various types. However, when we refer to ``the ordered geodesic'' we will mean the ordered geodesic defined in the continuous \Lp{}-space. The paths used in the networks will be referred to by the names already given.

\section{Comparing paths in RGDAG and \Lp{}-cube spacetime} \label{s:results}

So far we have established the geometry for ordered spacetimes, both the continuous \Lp{}-cube spacetime and the discrete RGDAG. Now we can ask how the paths in the discrete space are related to those in the continuous space. 
Instead of assuming that the shortest path in a network is the correct analogue of a geodesic, we ask which path in a DAG best approximates the continuum reference path. 

The \Lp{}-cube spacetime construction introduced above is useful precisely because it is ordered and yet it provides a controlled way to compare two regimes. Looking at the obvious representation of this spacetime, using orthogonal coordinates with identical scales for each coordinate axis, our intuition is that the most important path from source to the sink is the straight line, the diagonal of the unit hypercube $[0,1]^D$. 

We will not investigate the concept of ``straightness'' here. 
Rather, we note that this diagonal is also an extremal path in terms of the ordered distance $d_p$ of \eqref{e:dpdef} so this diagonal is indeed an ordered geodesic in our terminology. A more physical motivation for focusing on this diagonal is that in two-dimensions this cube-space is identical to the $1+1$-dimensional Minkowski spacetime. In that case, the diagonal is the path followed by any massive non-interacting particle moving from the spacetime point $(0,0)$ to the spacetime point $(1,1)$. That is, it is the path of least effort in that physical example.

However, the true power of our \Lp{}-cube spacetime model comes when we consider different values of $p$. For $p>1$, the diagonal is the shortest continuum source-to-sink path in terms of our ordered distance measure $d_p$ of \eqref{e:dpdef}. In this case, the diagonal is a minimal ordered geodesic.
However, as we have shown, our ordered version $d_p$ of \eqref{e:dpdef} for $0<p<1$ and $p<0$ is a reverse ordered metric satisfying the reverse ordered triangle identity \eqref{e:ormtridef}. Now we find that the diagonal has the longest length of any path between the source and sink as measured by $d_p$ of \eqref{e:dpdef}. In this case the diagonal is a maximal ordered geodesic in our terminology.

What we now want to do is compare the diagonal, the ordered geodesic in the continuum case, the various paths in the discrete RGDAG embedded in the \Lp{}-cube spacetime. Which path in the DAG is the best approximation to the ordered geodesic in the continuum model? To do this we first define measures that allow us to compare different paths.

\subsection{Geodesic Similarity Measures} \label{sec:similarity}

The simplest measure we can define is based on the weighted length defined by the $\Lp{}$ metric or our ordered distance $d_p$ of \eqref{e:dpdef}. In these units the length of the side of our hypercube $[0,1]^D$ is simply $1.0$.
The weighted length of our reference path between source $s$ and sink $t$, the diagonal $\gamma_\mathrm{O}$ is used to define the ordered geodesic, so $\geolength = d_p(s, t)$.  For our source at $s_i=0$ and sink at $t_i=1$ we have that $\geolength=D^{1/p}$.

To compare the weighted length of any path $P$ in our DAG, we use the same ordered distance measure $d_p$ to find the length of each edge and the sum gives us the weighted length $L_p(P)$ of a path $P=(v^{(a-1)})_{a=0}^\ell$
\begin{equation}
    L_p(P) = \sum_{a=1}^{m} d_p(v^{(a-1)},  v^{(a)}).
\end{equation}
We compare this to the length of the ordered geodesic using the normalised length deviation $\Delta_L(P)$ where
\begin{equation}
    \Delta_L(P) := \frac{L_p(P)}{\geolength} - 1.
    \label{e:DeltaLdef}
\end{equation}
Thus $\Delta_L(P) = 0$ if the path has the same total geometric length as the continuum reference, $\Delta_L(P) > 0$ if it is longer, and $\Delta_L(P) < 0$ if it is shorter. Note we use the weighted length as the basis for our metric even if the path was chosen by optimising the number of hops in the path. The hop distance is an intrinsically discrete measure which does not have exact analogue which we can use for the ordered geodesic curve defined in the continuum space\footnote{Other options include using an approximation where we use $R$ or the inter-point length scale $(\rho)^{1/D}=N^{1/D}$ as the units for the measurement of the length of the ordered geodesic in the continuum.}.

To measure how close a path $P$ follows our reference curve, to the continuum ordered geodesic $\gamma_\mathrm{O}$, we define the point-to-reference curve distance $d_\bot(v,\gamma_\mathrm{O})$ using the Euclidean perpendicular distance\footnote{We might want to use the same Minkowski distance as we use for the weighted length of paths. However, all we require here is some measure of the deviation of the discrete path from the diagonal so for simplicity we just use Euclidean geometry for this measure. In any case, there is often no ordered distance defined between the path and the point on the continuum ordered geodesic that we want to consider as closest e.g.\ the point on the ordered geodesic with the same time parameter $\tau$ as the node in the DAG. So we have to use a metric not an ordered metric for this measure.}
\begin{equation}
    d_\bot(v,\gamma_\mathrm{O}) 
    := 
    \min_{u \in \gamma_\mathrm{O} } \left( \sum_{i=1}^{D} |v_i - u_i|^2 \right)^{1/2} 
    \label{e:perpdist}
\end{equation}
where $v$ is the point and $\gamma_\mathrm{O}$ is the reference path of interest.
In our case we apply this to each point in a discrete path $P$ in our DAG and then take the mean to give the perpendicular distance measure $E_\bot(P)$. 
That is
\begin{equation}
    E_\bot(P) := \frac{1}{|P|+1} \sum_{v \in P} d_\bot(v,\gamma_\mathrm{O}) \, . 
    \label{e:DeltaEdef}
\end{equation}
The quantity $E_\bot(P)$ is zero if and only if every node of the path lies on the reference path $\gamma_\mathrm{O}$.

\subsection{Results for the random geometric DAG}

\subsubsection{Positive $p$-Values}\label{s:posp}

We start by looking at how the various paths behave in the two positive $p$ regimes, $0<p<1$ and $p>1$, in dimensions $D = 2, 3, 4$. The analysis is organised by mean out-degree $\tseexpect{\kout}$, which is varied by changing the geometric cutoff $R$ (see \appref{as:kout} for an analytic expression).

The results for $\Delta_L(P)$ are shown \figref{fig:summary-length-deviation} while those for $E_\bot(P)$ are given in \figref{fig:summary-perpendicular-distance}. A few features stand out.

First, the longest paths in the DAG, by hop or weighted length, are better approximations to the ordered geodesic when $p<1$ but the shortest paths in the DAG, by hop or weighted length, are better approximations to the ordered geodesic when $p>1$. Thus the optimal path in the DAG satisfies the same triangle identity as the underlying geometry used to create the DAG.

Secondly, for any given $p$, the best of the two optimal paths by hop length (shortest for $p>1$, longest for $p<1$) are poorer approximations to the ordered geodesic than the corresponding best path optimised for weighted length. This is necessary for the measure $\Delta_L(P)$ which uses the weighted length \Lp{} given that the weighted discrete path was defined by optimising the same quantity. What is more informative is that the perpendicular distance measure $E_\bot(P)$ shows the same result even though the perpendicular distance is not directly optimised when finding the longest/shortest paths in the DAGs.

Thirdly, for any given $p$, the best of the two optimal paths by hop length (shortest for $p>1$, longest for $p<1$) is still a good approximation to the ordered geodesic even if it is slightly less successful than the weighted path in following the ordered geodesic. This is important because if we only have the DAG and not the underlying space, e.g.\ a DAG created from a dataset, we can define the optimal hop length paths but not the weighted paths. Yet we can be confident that the optimal hop-length paths in the network are good choices as ordered geodesics in the discrete network.

Lastly, results for the optimal hop length paths, especially the shortest hops path, show significant fluctuations. This is because we are finding paths with small integer hop length. There can be many of these tied with the same value. In this case, during path reconstruction we select uniformly at random among all next nodes that preserve the optimal hop count. In addition, when the \Lp{} length of the ordered geodesic is slightly above an integer number of $R$, that is $0 < L_p(P) \mod R \ll 1$, the length of the shortest hops path may change sharply between different instances.

\begin{figure}[H]    
	\centering    
	\includegraphics[width=\linewidth]{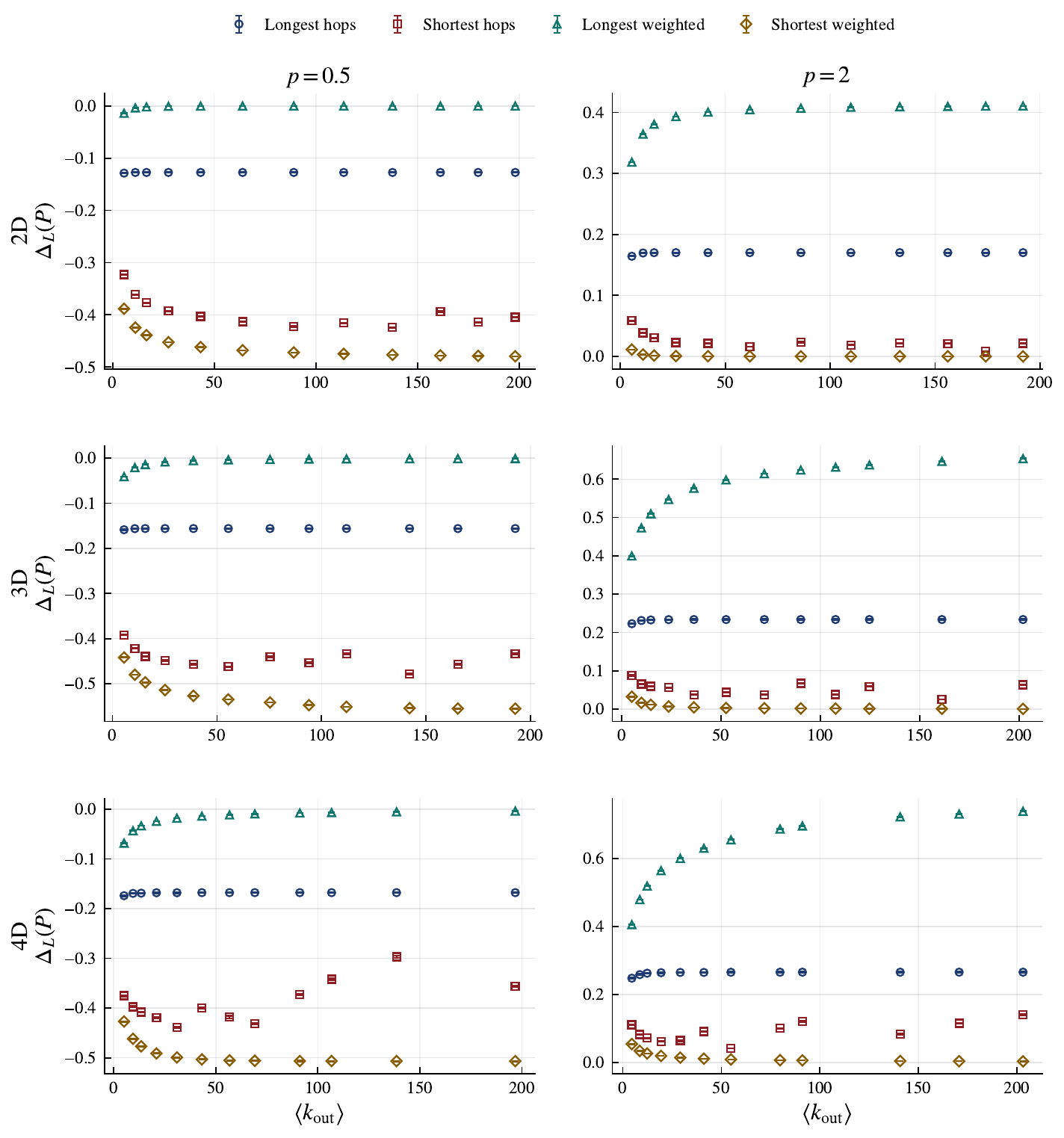}    
	\caption{Normalised length deviation $\Delta_L(P)$ for the four 
		graph-based paths as a function of mean out-degree $\langle \kout\rangle$, shown 
		for $p = 0.5$ and $p = 2$ in dimensions $D = 2, 3, 4$. Each point is averaged 
		over $500$ realisations for $D=2$ and $100$ realisations for $D=3,4$, with 
		$\Ntilde = 10{,}000$ nodes per; error bars show the standard error of the mean of $\Delta_L(P)$ across realisations. In the metric regime $p = 2$, $P^{(W)}_{\min}$ (yellow diamonds) remains closest 
		to zero across all dimensions, while in the fractional regime $p = 0.5$ this 
		role is taken by $P^{(W)}_{\max}$ (green triangles).}
	\label{fig:summary-length-deviation}
\end{figure}

\begin{figure}[H]    
	\centering    
	\includegraphics[width=\linewidth]{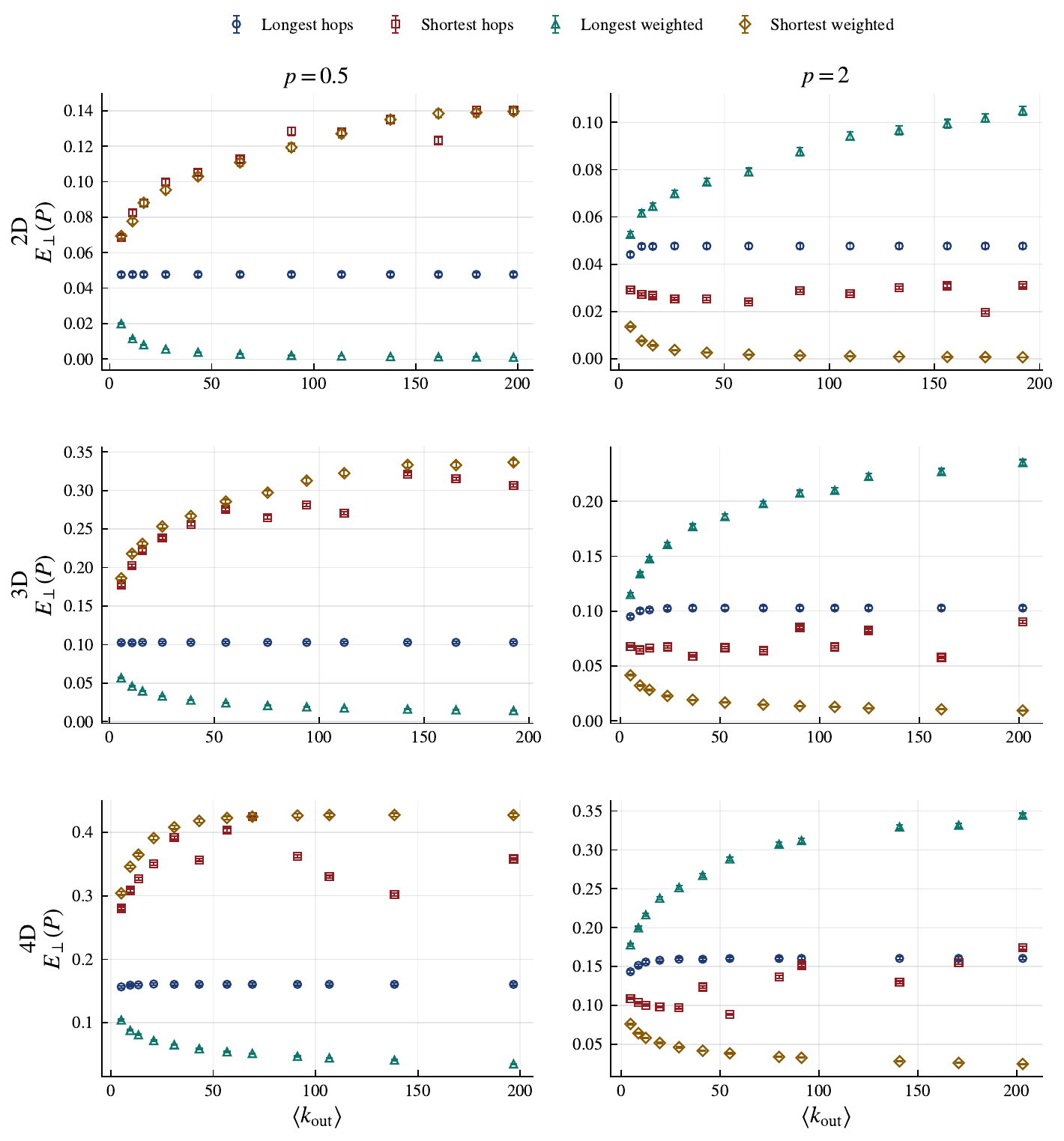}    
	\caption{Perpendicular distance measure $E_\bot(P)$ from the graph path 
		to the reference diagonal, plotted against mean out-degree $\tseexpect{\kout}$ 
		for $p = 0.5$ and $p = 2$ in dimensions $D = 2, 3, 4$. Each point is averaged 
		over $500$ realisations for $D=2$ and $100$ realisations for $D=3,4$, with 
		$\Ntilde = 10{,}000$ nodes per; error bars show the standard error of the mean of $E_\bot(P)$ across realisations. The same qualitative transition as in \figref{fig:summary-length-deviation} is 
		observed: $P^{(W)}_{\min}$ (yellow diamonds) is closest to the reference curve for $p = 2$, 
		whereas $P^{(W)}_{\max}$ (green triangles) is closest for $p = 0.5$.}   
	\label{fig:summary-perpendicular-distance}
\end{figure}

\clearpage

\subsubsection{Negative $p$-Values}\label{s:negp}

The properties listed in \tabref{t:dpproperties} show that the geometry of the RGDAG model in the $p<0$  regime is the closest match to that seen in Minkowski spacetime. As our analysis shows, this regime is perfectly well defined in our directed geometry and the ordered distance $d_p$ of \eqref{e:dpdef} is a well defined ordered reverse metric (as defined in \eqref{ae:ormtridef} of \appref{as:odistancemetric}) just as it was for $0<p<1$. Yet negative $p$ values have received almost no attention. So this section also acts as a proof of concept that this regime can be used for analysis. Again, we vary $R$ but we express our results in terms of the value for $\tseexpect{\kout}$ measured in our simulations (see \appref{as:kout} for analytic expressions).

Given our observation that the underlying geometry was the determinant in finding the paths closest to the ordered geodesic, we should expect that the $p<0$ regime should have the same properties as the $0<p<1$ regime given both obey the same reverse ordered triangle identity. This is exactly what we find as shown in \figref{fig:length-deviation-pm1} and \figref{fig:perp-distance-pm1}.

\begin{figure}[H]
    \centering
    \includegraphics[width=0.7\linewidth]{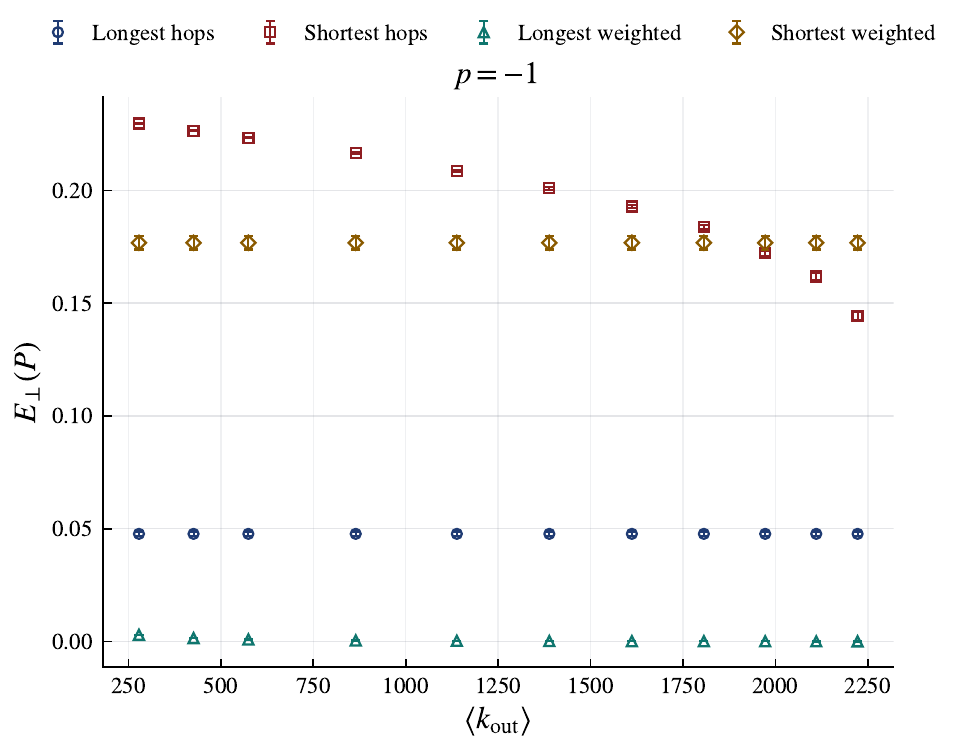}
    \caption{Perpendicular distance measure $E_\bot(P)$ from the graph path to the reference diagonal, plotted against mean out-degree $\tseexpect{\kout}$ for $p = -1$ in $D=2$ dimensions, with $\Ntilde = 10{,}000$ nodes, averaged over $500$ realisations per radius. Error bars show the standard error of the mean of $E_\bot(P)$ across realisations. As in the $0 < p < 1$ regime, $P^{(W)}_{\max}$ (green triangles) remains the path family closest to the reference, while the other path families remain substantially farther away.}
    \label{fig:length-deviation-pm1}
\end{figure}

\begin{figure}[H]
    \centering
    \includegraphics[width=0.7\linewidth]{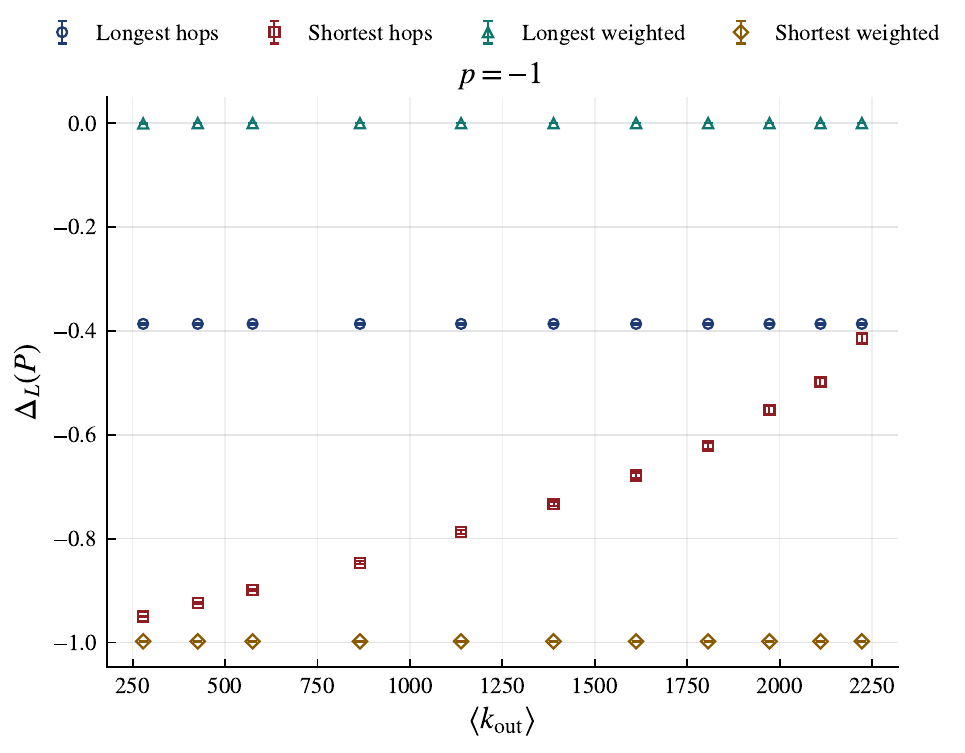}
    \caption{Normalised length deviation $\Delta_L(P)$ for the four graph-based paths as a function of mean out-degree $\tseexpect{\kout}$, $p = -1$ in $D=2$ dimensions, with $\Ntilde = 10{,}000$ nodes, averaged over $500$ realisations per radius. Error bars show the standard error of the mean of $\Delta_L(P)$ across realisations. As in the $0 < p < 1$ regime, $P^{(W)}_{\max}$ (green triangles) remains the path family closest to the reference, while $P^{(W)}_{\min}$ (yellow diamonds) deviates most strongly, consistent with the reverse ordered triangle identity shared by both regimes.}
    \label{fig:perp-distance-pm1}
\end{figure}

\subsubsection{The transition around $p=1$}

There is a change in the behaviour of our models at $p = 1$, the point at which the ordered distance $d_p$ of \eqref{e:dpdef} changes from an ordered metric obeying the ordered triangle identity \eqref{e:omtridef} for $p \geq 1$ to a reverse ordered metric obeying the reverse ordered triangle identity of \eqref{e:ormtridef}. 
This is reflected in the fact that the diagonal is a minimal ordered geodesic for $p>1$ but it is a maximal ordered geodesic for $0<p<1$ and $p<0$. 

At $p=1$ we are using the Manhattan metric to give our weighted distances and this has the special property that \emph{all} paths from source to target have the same weighted length and this includes the ordered geodesic. To examine this boundary we perform a finer sweep in $p$ around $p = 1$ and track both the length of paths through $\Delta_L(P)$ \eqref{e:DeltaLdef} and the distance to the ordered geodesic through $E_\bot(P)$ \eqref{e:DeltaEdef}. Results are shown in \figref{fig:fine-p-sweep-length-ratio} and \figref{fig:fine-p-sweep-perpendicular-distance}.

In the case of the paths with optimal weighted length, the crossover is especially clear. The two optimal weight paths meet at $p = 1$, as they must, with both varying linearly on either side of $p=1$ but with a discontinuity in slope at $p=1$.

The paths which optimise hop length show a much weaker dependence on the change in geometry. Hop length is discrete, and many paths can attain the same optimal value. 
When building an optimal hop-length path, we choose uniformly at random from the set of nodes that we can add to the path being built and which lie on an optimal hop count path. 
In $\Delta L(P)$, the two hop-based curves meet at $p=1$, where all paths have the same weighted length, but there is no corresponding exchange in which hop path lies closest to the continuum reference within this narrow range: the shortest-hop path remains closer to $\Delta L(P)=0$ on both sides of $p=1$. Similarly, in $E_\bot(P)$, no crossover is visible: the longest-hop path remains closer to the reference diagonal on both sides of $p=1$.

The weighted-path transition at $p=1$ is consistent across all dimensions studied, $D=2,3,4$. The hop-based paths do not show a corresponding exchange of the preferred path within this narrow range of $p$ in either $\Delta L(P)$ or $E_\perp(P)$.

\begin{figure}[H]
	\centering
	\includegraphics[width=\linewidth]{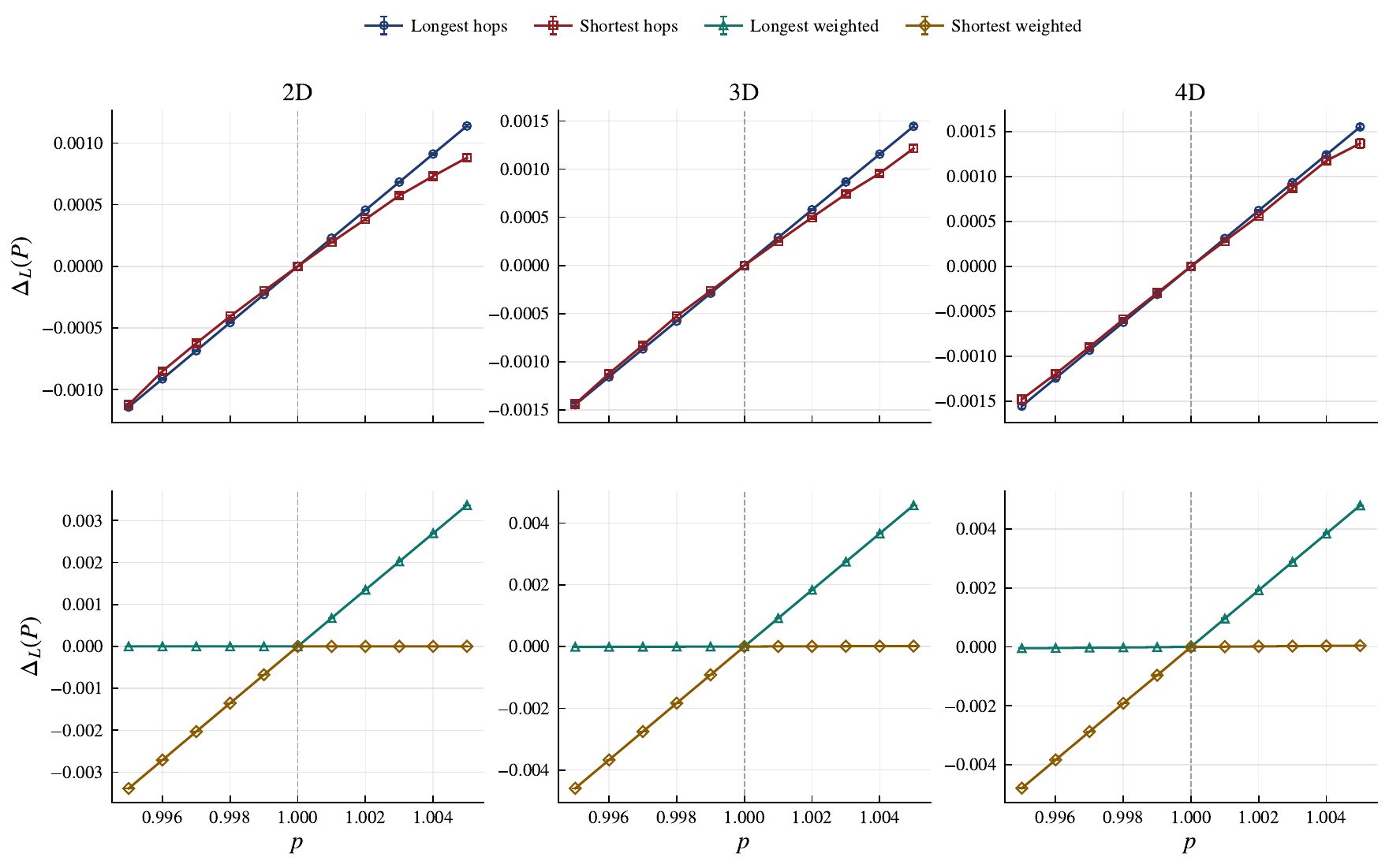}
	\caption{A sweep in $p$ around the transition at $p = 1$, showing the normalised length deviation $\Delta_L(P)$ of \eqref{e:DeltaLdef} for hop-length optimised paths (top row) and optimised weight paths (bottom row) in dimensions $D = 2, 3, 4$. This is for networks with $\Ntilde = 10{,}000$ nodes and $\tseexpect{\kout}=100$. 
		Error bars show the standard error of the mean of $\Delta_L(P)$ across realisations: $500$ realisations for $D=2$ and $100$ for $D=3,4$. 
		The longest and shortest weighted paths meet at $p = 1$ and exchange roles across the transition: $P^{(W)}_{\max}$ (green triangles) is 
		closest to the reference for $p < 1$, while $P^{(W)}_{\min}$ (yellow diamonds) is closest for 
		$p > 1$. The hop-based curves also meet at $p=1$, but do not show the same exchange in which path lies closest to the reference.
        The lines between data are added as an aid to visualisation and should not be interpreted as implying continuity or meaningful intermediate values between the observations.
		}
	\label{fig:fine-p-sweep-length-ratio}
\end{figure}

\begin{figure}[H]
	\centering
	\includegraphics[width=\linewidth]{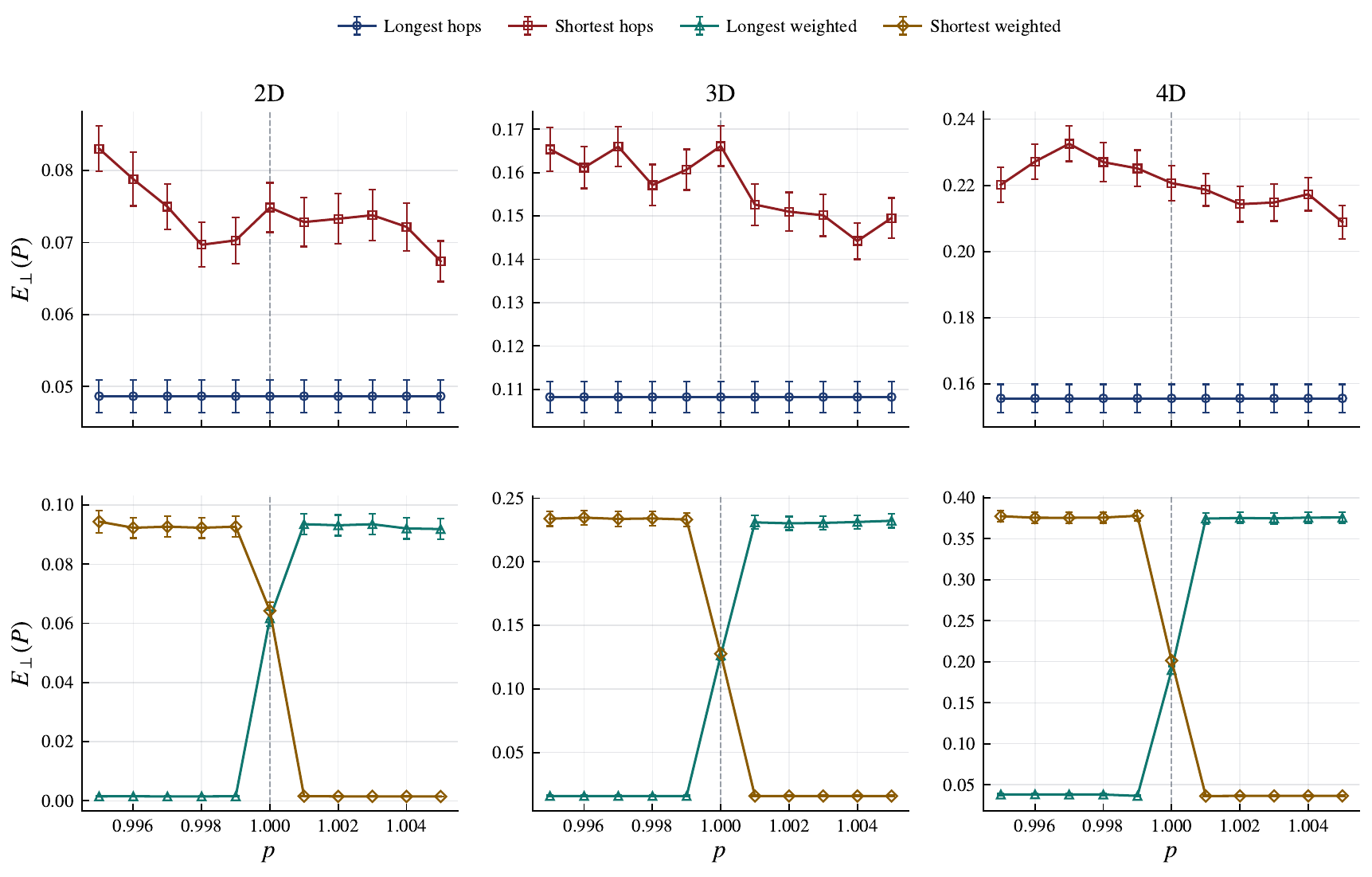}
	\caption{Fine sweep in $p$ around the transition at $p = 1$, showing the 
		perpendicular distance measure $E_\bot(P)$ for hop-based and weighted paths in 
		dimensions $D = 2, 3, 4$, with $\Ntilde = 10{,}000$ nodes and $\tseexpect{\kout}=100$. Error bars show the standard error of the mean of $E_\bot(P)$ across realisations: $500$ realisations for $D=2$ and $100$ for $D=3,4$. The weighted paths display a sharp crossover at 
		$p = 1$, with $P^{(W)}_{\max}$ (green triangles) closest to the reference for $p < 1$ and 
		$P^{(W)}_{\min}$ (yellow diamonds) closest for $p > 1$. The hop-based paths do not detect the same 
		transition. The lines between data are added as an aid to visualisation and should not be interpreted as implying continuity or meaningful intermediate values between the observations.
		}
	\label{fig:fine-p-sweep-perpendicular-distance}
\end{figure}

\subsection{Results for DAGs with Fat-Tails }\label{s:PA-model-results}

\Figref{fig:delta_L-PA} and \figref{fig:delta_E-PA} show that adding the preferential-attachment layer to our RGDAG model of \secref{s:rgdag} does not alter the qualitative relationship between the latent geometry and the graph path that best approximates the continuum ordered geodesic. 

For $p=2$, the shortest weighted path $P^{(W)}_{\min}$ remains closest to the reference according to both observables. Its normalised length deviation $\Delta_L(P)$ and perpendicular distance measure $E_\bot(P)$ approach zero as the mean base out-degree $\langle k_{r,\mathrm{out}}\rangle$ increases. The longest weighted path $P^{(W)}_{\max}$ shows the largest deviations, while the two hop-based paths generally lie between the weighted extremes.

For $p=0.5$, this ordering is reversed. The longest weighted path $P^{(W)}_{\max}$ has a length deviation $\Delta_L(P)$ close to zero and the smallest $E_\bot(P)$ across the full range of connectivity studied. The remaining path families have systematically negative length deviations, with $P^{(W)}_{\min}$ lying furthest below the continuum reference length. Thus, despite the addition of causally valid edges that are not required to satisfy the geometric radius cutoff, the distinction between the metric and fractional regimes is preserved. The weighted-path result is therefore robust to the addition of preferential-attachment structure.
\begin{figure}[H]
    \centering
    \includegraphics[width=\linewidth]{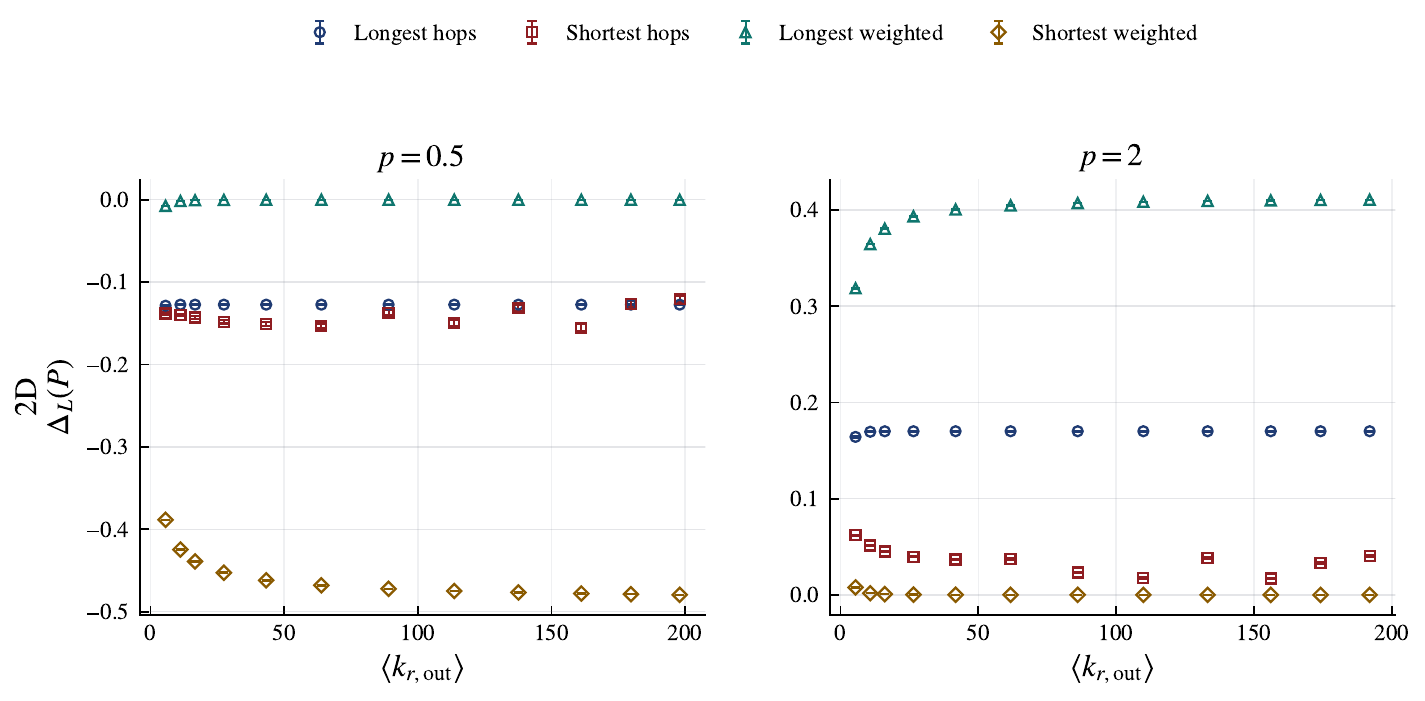}
    \caption{Normalised length deviation $\Delta_L(P)$ for the \Lp{}-cube space model extended with an extra edge copying step. We show the four optimal paths as a function of the expected primary out-degree $\tseexpect{k_{r,\text{out}}}$, shown for $p = 0.5$ and $p = 2$ in $D = 2$ dimensions. Each point is averaged over $500$ realisations, with $\Ntilde = 10{,}000$ nodes per graph; error bars show the standard error of the mean of $\Delta_L(P)$ across realisations. In the metric regime $p = 2$, $P^{(W)}_{\min}$ (yellow diamonds) remains closest to zero, while in the fractional regime $p = 0.5$ this role is taken by $P^{(W)}_{\max}$ (green triangles).}
    \label{fig:delta_L-PA}
\end{figure}

\begin{figure}[H]
    \centering
    \includegraphics[width=\linewidth]{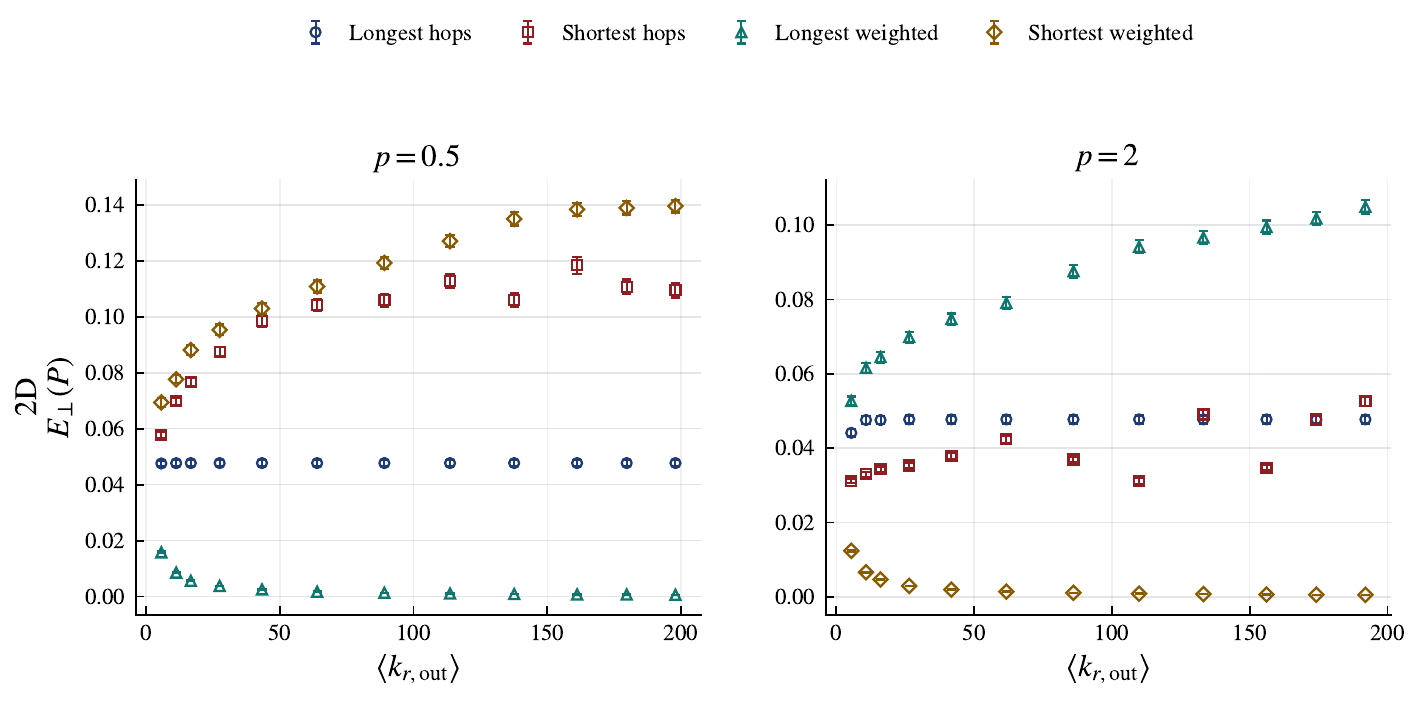}
    \caption{Perpendicular distance measure $E_\bot(P)$ for the \Lp{}-cube space model extended with an extra edge copying step. We show the four optimal paths as a function of the expected primary out-degree $\tseexpect{k_{r,\text{out}}}$, shown for $p = 0.5$ and $p = 2$ in $D = 2$ dimensions. Each point is averaged over $500$ realisations, with $\Ntilde = 10{,}000$ nodes per graph; error bars show the standard error of the mean of $E_\bot(P)$ across realisations. The same qualitative transition is observed: $P^{(W)}_{\min}$ (yellow diamonds) is closest to the reference curve for $p = 2$, 
	whereas $P^{(W)}_{\max}$ (green triangles) is closest for $p = 0.5$.}
    \label{fig:delta_E-PA}
\end{figure}

\section{Conclusion and Discussion} \label{s:conclusion}

In this paper, we asked whether the geometry in which a directed acyclic graph is embedded can be identified from its source--sink path structure. To do this we establish a set of concepts such as ordered distance and ordered geodesic appropriate for discrete and continuous spacetimes, a space where there is an explicit partial order on the points in the space. Then we constructed a family of ordered spaces, the \Lp{}-cube spacetimes, consisting of points $\mathbf{x}\in[0,1]^D\subset\Rbb^D$, with the cube-space rule \eqref{e:cubespacerule0} defining the partial order $\trianglelefteq$ and the ordered distance $d_p(\xvec,\yvec)$ defining the separation between ordered points. The single parameter $p$ controls the relevant triangle identity and therefore a key geometric property of the space: for $p>1$, $d_p$ satisfies the ordered triangle identity, while for $0<p<1$ and $p<0$ it satisfies the reverse ordered triangle identity, with $p=1$ marking the transition between the two regimes. This provides a family of continuous ordered spaces in which we can investigate how the underlying geometry is reflected in the paths of DAGs embedded within them.

We generate random geometric DAGs embedded in these ordered spaces, and so representing discrete approximations of the embedding space. We then compared four paths from the source node to the sink node: the shortest and longest weighted paths, $P^{(W)}_{\min}$ and $P^{(W)}_{\max}$, which is the summation of the \Lp{} distance between every consecutive pair of nodes on the path, and the shortest and longest hop paths, $P^{(\mathrm{hop})}_{\min}$ and $P^{(\mathrm{hop})}_{\max}$, which is the number of edges in the path. We compared these paths with the continuum ordered geodesic using two measures: the deviation of the weighted path length from the continuum ordered geodesic length, $\Delta_L(P)$, and the Euclidean perpendicular distance from the continuum ordered geodesic, $E_\perp$. We then extended this analysis to a citation-style model which retains the \Lp{}-cube spacetime embedding while adding a more realistic fat-tailed degree distribution through a local search (from which preferential attachment emerges), and this introduces additional edges beyond the strict geometric cutoff.

For the weighted paths, the path closest to the continuum ordered geodesic is the one whose extremal property agrees with the triangle identity of the underlying ordered space. For $p>1$, the shortest weighted path $P^{(W)}_{\min}$ is closest to the ordered geodesic, while for $0<p<1$ and $p<0$, the longest weighted path $P^{(W)}_{\max}$ is closest. This is consistent with the fact that the continuum ordered geodesic is itself the corresponding optimal path in the continuum limit. The fact that the geometry of the DAG is encoded in the edge weights is not a big surprise. The weights are the ordered distances between points and so they carry a lot of the geometric information about the embedding space into the DAG. Unfortunately, when dealing with a DAG derived from data, there is no explicit embedding space so we do not normally have these weights available.

On the other hand, we found similar results for the optimal hop-based paths: the shortest-hop path tends to provide the better approximation in the metric regime, while the longest-hop path tends to do so in the reverse-metric regimes, as illustrated in \figref{fig:allpaths}. This was not guaranteed, since hop length contains no explicit information about the ordered distance. The same weighted-optimised path/hop-optimised path relationship is also recovered in the preferential-attachment citation model, where the geometry of the embedding space has been diluted further by additional processes when constructing the DAG.
This result is particularly useful for networks obtained from data, where there may be no intrinsic embedding space and no intrinsic notion of geometric distance, while hop length can always be defined directly from the graph. The correlation between hop length and weighted length suggests that, despite being blind to the latent space, hop length can still capture important features of its geometry. This broad distinction is observed across dimensions $D = 2, 3, 4$, although the hop-based signal is weaker and more variable than for the weighted paths. This is simply because the low integer values of hop length likely to be encountered are less able to distinguish between paths (see \citet{PE25} for a wider discussion of this issue).

\begin{figure}[H]
    \centering
    \includegraphics[width=\linewidth]{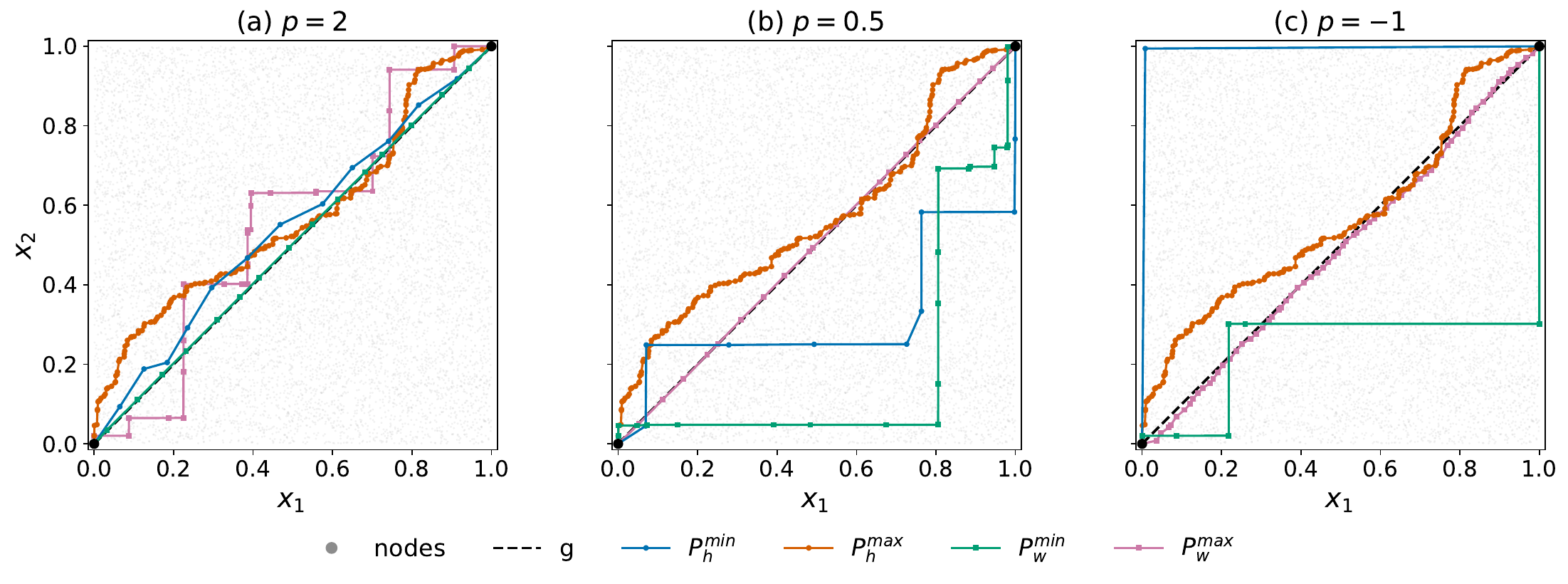}
    \caption{Source--sink paths in a $2D$ random geometric DAG across the different metric regimes. The same realisation of $\Ntilde=10{,}000$ nodes is used in all panels, with $\tseexpect{\kout}= 100$.}
    \label{fig:allpaths}
\end{figure}

Another result is our construction of an ordered space with three different geometric regimes. The different $p$ regimes of our \Lp{}-cube spacetime have different geometric features. The cases $p>1$ and $0<p<1$ have finite forward connection volumes, while still allowing us to choose between geometries with different ordered triangle identities.
The $p<0$ regime is novel but may at first appear redundant and of little practical use. However, its forward connection volume is infinite, making it closer to the situation in relativity, where the corresponding volume of the forward connection kernel is also infinite. There may therefore be situations in which this property is physically relevant and the $p<0$ regime is preferred. 

More generally, our results show that there is a choice of ordered geodesic in a DAG. When a researcher chooses to work with either the shortest or the longest path, what the researcher is effectively choosing is the geometry of the DAG, something which must be reflected in any latent space in which the DAG is assumed to live. For example if trying to embed a DAG in Minkowski spacetime, one should be using longest paths in the DAG \citep{CE17}. 
The longest path may be more informative when trying to follow innovation in citation networks \citep{HPEO24,HPEO24a,HPEO25,PE25}, since an innovation may develop through a sequence of small intermediate advances, each of which contributes to the final result. This emphasis on longer chains is also consistent with transitive reduction, which removes transitive shortcut edges while preserving reachability and thereby retains the longest source-target paths in a DAG \citep{CGLE14,AVE26}.
By contrast, a shortest path may be useful when the aim is to minimise the number of intermediate steps, for example when trying to trace the most accurate copies of a historical document or following a rumour back to its original source. For scientific information, this preference for shorter chains is also reflected in the concept of intermediacy, which tends to favour publications lying on shorter source--target paths \citep{SWTE20}. In this case one should look towards ordered spaces obeying ordered triangle identities if trying to embed the citation network.

There are several limitations to the current work. The interval DAG assumption, where every node lies on at least one path from source to sink, is a simplification: when analysing real citation networks, one may have several different key papers in mind which can act as multiple sources or sinks. The effect of such nodes on path observables is not addressed here. 

A natural direction for future work is to understand why the hop-based paths reproduce the same broad dependence on the underlying geometry as the weighted paths. One possibility is that the edges forming the optimal hop paths have characteristic weighted lengths. For the shortest-hop path, this scale may be set by the cutoff radius $R$, while for the longest-hop path it may be set by the typical inter-point scale, $a=N^{-1/d}.$
This would imply a strong relationship between hop length and weighted length for the optimal paths. Measuring the edge-length distributions directly would test this possibility.  Extending the framework to more general DAG structures, and to network types beyond citation networks such as causal graphs, are further directions for future work.

\subsection*{Data Accessibility} 
Copies of the \href{https://doi.org/10.6084/m9.figshare.34024839}{code} used in this work are available online \citep{AE26a}.

\subsection*{Funding}
AH is supported by a PhD scholarship from the Saudi Arabian Ministry of Education.

\begin{center}
	\LARGE\textbf{Appendices}
\end{center}
\appendix
\renewcommand{\thesection}{\Alph{section}}
\renewcommand{\theequation}{\thesection\arabic{equation}}
\renewcommand{\thefigure}{\thesection\arabic{figure}}
\renewcommand{\thetable}{\thesection\arabic{table}}
\numberwithin{equation}{section}
\numberwithin{figure}{section}
\numberwithin{table}{section}
\setcounter{section}{0}
\renewcommand{\theHsection}{\Alph{section}}
\renewcommand{\theHequation}{\thesection\arabic{equation}}
\renewcommand{\theHfigure}{\thesection\arabic{figure}}
\renewcommand{\theHtable}{\thesection\arabic{table}}

%
%
\input{daggeometrypathsappendix}

%


\newpage
\addcontentsline{toc}{chapter}{Bibliography}



\bibliographystyle{agsm}


\input{daggeometrypathsref}
\end{document}

%% file: daggeometrypathsappendix.tex

\section{Space, Time, Order and Distance}\label{s:space}

In this section, we summarise some of the formal definitions surrounding the concept of distance between points. Our main focus is on the case where there is a sense of order on our points, such as found in directed networks and in continuous spacetimes. This also helps to define our notation and terminology since the definitions we use are a hybrid of concepts from both discrete directed networks and from continuous spacetimes. 

In this section, we will refer most of the time to a set of points $\pointset$. This includes cases where this is an infinite set of points in continuous spacetime. However, the results also apply equally to the case where the points are a finite set of nodes in a network, usually denoted as $\Vcal$. 

Here will need our points $\Xcal$ to be connected by a set of \tsedef{directed edges} $\Ecal$. We will denote a directed edge from point $x$ to point $y$ as $\dedge{x}{y}$. Edges are an intrinsic part of the definition of a network\footnote{For undirected networks, if there is an undirected edge $\{u,v\}$, we assign two directed edges between two nodes $u$ and $v$ connected by an undirected edge. That is we have both $\dedge{u}{v}$ and $\dedge{v}{u}$ and work with this symmetrised directed graph.}. 
For continuous spacetimes, we define a directed edge $\dedge{x}{y}$ to exist when there is at least one curve from $x$ to $y$. In Euclidean space this is simple as all points are connected to all other points, so the set of directed edges $\Ecal$ is the set of all ordered pairs of points $\Ecal = \pointset \times \pointset$. In Minkowski spacetime, the set of edges is much more restricted as many points can be space-like separated, that is there is no path between them in either direction.  Here we define the directed edges in Minkowski space to be the time-like separated pairs of points in causal order, $\Ecal = \{(x,y) | \tau^2(x,y)>0 , x_0 < y_0\}$. Even when there is a curve from $x$ to $y$ the edge in the reverse direction is never present because of causality. 

Some of the following definitions are standard when working with networks with a finite set of nodes $\nodeset$ and a set of directed edges
$\Ecal$. Other concepts are more familiar in work on continuous spaces. For simplicity, we will just refer to points in the set $\pointset$ and a set of directed edges $\Ecal$ between pairs of points to cover both discrete networks and continuous spaces.

\subsection{Distances and Metrics}\label{as:distance}

In mathematics a \tsedef{distance} is a real function of two points in a set $\pointset$ that obeys the following axioms
\begin{definition}[Distance]\label{d:distance}
	A \tsedef{distance} on a set $\pointset $ is a function $d : \pointset  \times \pointset  \to \mathbb{R}$
	such that for all $x,y \in \pointset $:
	\begin{subequations}
		\begin{align}
			\text{Non-negativity:} & \;\; d(x,y) \geq 0 \, , \\
			\text{Reflexivity:}    & \;\; d(x,x) = 0 \, , \label{ae:reflexivity} \\
			\text{Symmetry:}       & \;\; d(x,y) = d(y,x) \, .
		\end{align}
		\label{ae:distancedef}
	\end{subequations}
\end{definition}
In fact for the rest of this discussion we will work with distance-like functions with a stronger condition, indiscernibility, replacing reflexivity\footnote{The only place where this matters in our discussion is when we consider null curves in relativity.}. That is we will invariably demand that $d(x,y) = 0$ if and only if $x = y$ so that $d(x,y)>0$ unless $x=y$.

If a distance function satisfies the \tsedef{triangle identity} then it is a \tsedef{metric}, that is 
\begin{definition}[Metric]\label{d:metric}
A \tsedef{metric} on $\pointset $ is a distance $d : \pointset  \times \pointset  \to \mathbb{R}$
that satisfies, for all $x,y,z \in \pointset $ :
\begin{subequations}
	\begin{align}
		\text{Non-negativity:}            & \;\; d(x,y) \ge 0 \, , \\
		\text{Indiscernibles:}            & \;\; d(x,y) = 0 \; \text{if and only if} \; x = y \, , \\
		\text{Symmetry:}                  & \;\; d(x,y) = d(y,x) \, , \\
		\text{Triangle inequality:}       & \;\; d(x,z) \leq d(x,y) + d(y,z) \, .
	\end{align}
	\label{ae:metricdef}
\end{subequations}	
\end{definition}
Note in particular that the triangle identity is true for any permutation of the same three points $x$, $y$ and $z$, so  $d(x,y) \leq d(x,z) + d(z,x)$ and $d(y,z) \leq d(y,x) + d(x,z)$ must also hold.

A \tsedef{metric space} is a space $\pointset$ with a metric $d$.

\subsubsection{Minkowski Distance}

One well known example of a distance function is the \tsedef{Minkowski distance}\footnote{Minkowski distance has no mathematical relation to the proper-time used to measure separation of points in Minkowski spacetime} which is one generalisation of the well known Euclidean distance. The Minkowski distance $d_p(x,y)$, also called the \Lp{} norm or simply the $p$-norm distance, between two points $x=(x_1,x_2,\dots,x_D) \in \Rbb^D$ and $y=(y_1,y_2,\dots,y_D) \in \Rbb^D$ in real space $\pointset \subseteq \Rbb^D$,  is
\begin{equation}
	d_p(x,y)  	
	=\Big(\sum_{i=1}^D |x_i-y_i|^p \Big)^{1/p} \, .
	\label{ae:lpdef}
\end{equation}
For $p \geq 1$, the Minkowski distance $d_p$ satisfies the triangle
inequality and hence defines a metric on $\mathbb{R}^D$ \citep{DD09}. 
Particular examples of the Minkowski distance include
\begin{itemize}
	\item the Euclidean distance for $p=2$,	$d_2(x,y) = (\sum_{i=1}^D (x_i-y_i)^2 )^{1/2}$,
	\item the Manhattan distance for $p=1$, $d_1(x,y) = \sum_{i=1}^D |x_i-y_i|$,
	\item the Chebyshev distance for $p\to\infty$, $d_\infty(x,y) = \max_i ( |x_i-y_i|)$.
\end{itemize}

\subsection{Paths, Geodesics and Generalised Distances}\label{as:pggd}

The main aim of this section is to give the definition of a geodesic that we use in our work. To do that we will carefully define what we mean by distance based on the length of a path such that this applies for both undirected and directed networks. In particular, we will drop the symmetry requirement of \defref{ae:distancedef} and define what we call a ``generalised distance''. We will consider the more specialised case of directed networks with no cycles, networks with partial orders, in the next section, \secref{as:osetdistancemetric}.

We start with the standard definition of a walk used in network analysis.
\begin{definition}[Walk] \label{d:walk}
	A \tsedef{walk} $\walkseq$ in a space $\pointset$ is a \emph{sequence of points} $x_i \in \pointset$, denoted as $\walkseq = (x_0,x_1,\ldots,x_\ell)$ where each point is connected to the next point in the sequence by an edge, so $\dedge{x_i}{x_{i+1}} \in \Ecal$. 
\end{definition}

We use the existence of a walk between two points in one direction to define a directed reachability relation.
\begin{definition}[Reachability] \label{d:reachability}
	A point $y$ is \tsedef{reachable} from point $x$ if there is a walk from $x$ to $y$. We denote this as $x\twoheadrightarrow y$. 
\end{definition}
As we want to work with directed networks, we do not assume this is symmetric, so $x\twoheadrightarrow y$ does not imply $y\twoheadrightarrow x$. In a Euclidean space, this ``reachability'' criteria is trivial as all points are connected to every other point. In a spacetime, such as Minkowski space, this is not trivial as reachability is defined in terms of causality and two points may not be connected or they are only connected in one direction. 

It is important for several of the results that follow that we include \tsedef{trivial walks} $\walkseq_x = (x)$ as a legitimate walk.  
\begin{definition}[Trivial walk] \label{t:walk}
	A \tsedef{trivial walk} $\walkseq_x$ is the sequence of a single point $x\in\pointset$, so $\walkseq_x=(x)$. 
\end{definition}

We can also join two walks to create a new walk if the second walk $\walkseq^{\prime\prime}$ starts where the first walk $\walkseq^{\prime}$ finished\footnote{This shows that our reachability relation $\twoheadrightarrow$ is transitive. It is reflexive $x\twoheadrightarrow x$ because of teh existence of trivial walks. However, this is not enough for the reachability relation $\twoheadrightarrow$ to be a partial order.}. More formally:-
\begin{definition}[Walk concatenation] \label{d:walkconcat} 
	The concatenation of two walks $\walkseq^{\prime} = (x_n)_{n=0}^{N}$ and  $\walkseq^{\prime\prime} = (y_m)_{m=0}^{M}$ to create a new walk $\walkseq = (z_r)_{r=0}^{M+N} $ denoted as $\walkseq = \walkseq^{\prime} \circ \walkseq^{\prime\prime}$ is defined to be 
	\begin{subequations}
		\begin{eqnarray}
			\walkseq = (z_r)_{r=0}^{M+N} 
			&=&
			(x_n)_{n=0}^{N} \circ (y_m)_{m=0}^{M} 
			\\
			&=&
			(x_0, x_1, \ldots, x_{N-1}, x_{N} = y_0 , y_{1}, \ldots, y_{M-1}, y_{M} )
			\\
			\quad \text{where}
			\quad
			z_r
			&=&
			\begin{cases}
				x_r     & \text{if} \;\; 0 \leq r \leq N , \\
				y_{r-N} & \text{if} \;\; N \leq r \leq M+N .
			\end{cases}
		\end{eqnarray}
		\label{e:walkconcat}
	\end{subequations}
\end{definition}
Note that a walk is simply a concatenation of the sequence of edges that can also be used to define any walk, $E(\walkseq) = ( \dedge{x_{i-1}}{x_i})_{i=1}^{i=\ell}$.

Walks are important in networks as they define which nodes are connected. Our generalisation of edges to the continuous space case give the same interpretation in that case, the existence of a curve from point $x$ to point $y$ means $x$ is connected to $y$. We may interpret the curve as a walk from $x$ to $y$, one made up of the points on the curve. What is important is that either way we have a set $\Ccal$ of pairs of points which are connected, directly by edges or indirectly by walks. 

A \tsedef{path} $P$ is a walk where no two nodes are the same.
\begin{definition}[Path]\label{d:path}
	A \tsedef{path} $P$ in a space $\pointset$ is a \emph{sequence of points} $x_i \in \pointset$, denoted as $P = (x_0,x_1,\ldots,x_\ell)$ where each point is connected to the previous point and it is distinct, so
	$x_i=x_j$ where $x_i,x_j \in P$ if and only if $i=j$. 
\end{definition}
All paths are walks but the reverse is not usually true in a generic network because of cycles. However, for a directed acyclic graph (DAG), all walks are also paths because of the lack of cycles.  	
The trivial walks are also paths, trivial paths, $P_x = (x)$.

A notation we will also use is that $\pathset$ is the set of all possible paths in the space $\Vcal$ while $\pathset(x,y)$ is the \emph{set} of \emph{all} paths which start an initial point $x\in\pointset$ and end at $y \in\pointset$. 
Sometimes we will use $P_{xy}$ to denote \emph{one} such path running from $x$ to $y$, so that  $P_{xy} \in \pathset(x,y)$.

We also need a function $L$ that gives the \tsedef{length} $L(\walkseq)$ of a walk $\walkseq$. For our work, a length function must obey the following axioms\footnote{It is quite feasible to replace the indiscernibles axiom with the reflexivity axiom of \eqref{ae:reflexivity} (zero for trivial paths but can be zero for other paths). The indiscernibles axiom seems more relevant in the network context while the reflexivity condition \eqref{ae:reflexivity} will allow for null paths in spacetimes.}
\begin{definition}[Length of a walk] \label{d:pathlength}
	The \tsedef{length} $L(\walkseq)\in \Rbb$ of a walk $\walkseq$ is a function that obeys the following axioms\footnote{It is possible to generalise this further. We are using the natural concept of addition for real numbers but we can use other algebraic structures for example Group Graphs \citep{E25}. Some of our later results will not apply for such generalisations and we do not consider them here.}
	\begin{subequations}
		\begin{align}
			\text{Reality:}        \quad & L(\walkseq) \in \Rbb  , \\
			\text{Non negativity:} \quad & L(\walkseq) \geq 0  , \\
			\text{Indiscernibles:} \quad & L(\walkseq)   =    0   \;\; \text{iff} \;\; \walkseq=(x), 
			\label{d:lengthindisc}
			\\
			\text{Additivity:}     \quad & L(\walkseq^{\prime} \circ \walkseq^{\prime\prime})   =    L(\walkseq^{\prime}) + L(\walkseq^{\prime\prime})  .
			\label{ae:lengthadd}
		\end{align}
		\label{ae:lengthdef}
	\end{subequations}
\end{definition}
This is for any walk $\walkseq$ in the space $\pointset$ including all trivial paths $(x)$. 

Since a single directed edge is always a walk, it follows from the non negativity and indiscernibles properties of length that:
\begin{lemma}[]\label{l:lengthofedge}
	The length of every edge (even those forming self-loops) must be a positive number.
\end{lemma}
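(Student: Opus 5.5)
The plan is to treat a single directed edge $\dedge{x}{y}\in\Ecal$ as a walk and read the claim off the axioms of \defref{d:pathlength}. By \defref{d:walk}, the two-element sequence $\walkseq_e=(x,y)$ is a walk: it consists of points of $\pointset$ and each consecutive pair is joined by an edge of $\Ecal$. This holds whether or not $x=y$, so self-loops $\dedge{x}{x}$ are covered too. The length of the edge is then, by definition, $L(\walkseq_e)$.

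Non-negativity gives $L(\walkseq_e)\geq 0$, so the only remaining task is to rule out $L(\walkseq_e)=0$. By the indiscernibles axiom \eqref{d:lengthindisc}, $L(\walkseq)=0$ holds if and only if $\walkseq$ is a trivial walk $(z)$ for some single point $z$. So I will argue that $\walkseq_e=(x,y)$ is never a trivial walk, and hence $L(\walkseq_e)>0$.

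The only delicate point is the self-loop case $x=y$. There the walk $(x,x)$ visits only one point, and one might worry that it is identified with the trivial walk $(x)$. I will stress that walks are \emph{sequences}, not sets of points. The sequence $(x,x)$ has two entries, whereas a trivial walk in \defref{t:walk} has exactly one, so $(x,x)\neq(x)$ as sequences.

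This reading is also forced by additivity \eqref{ae:lengthadd}. Under the concatenation of \defref{d:walkconcat}, $(x)\circ(x,x)=(x,x)$, so
\[
L\big((x,x)\big)=L\big((x)\big)+L\big((x,x)\big).
\]
This is consistent with $L((x))=0$. However, a walk that went around the loop $k$ times would have length $k\,L((x,x))$, and these walks must all be distinct non-trivial walks. So the indiscernibles axiom applies with $(x,x)$ counted as non-trivial, which gives $L((x,x))>0$.

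Combining non-negativity with the strict inequality gives a positive length for every edge, self-loops included.
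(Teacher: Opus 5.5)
Your argument is correct and matches the paper's own reasoning. A single edge $\dedge{x}{y}$ is the two-entry walk $(x,y)$, so it is not a trivial walk; non-negativity plus the indiscernibles axiom \eqref{d:lengthindisc} then give it strictly positive length, and because walks are sequences ($(x,x)\neq(x)$) self-loops are already covered. The additivity paragraph is superfluous and does not actually ``force'' this reading: identifying every $(x,\ldots,x)$ with $(x)$ and giving them all length zero would satisfy additivity equally well, so drop that paragraph and rely on the sequence-length observation alone.
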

This includes the length of edges in self-loops. 
Such edge lengths are often defined to be equal to the edge weights so in that case our length axioms in \defref{d:pathlength} put limitations on what edge weights can be used as edge lengths. The indiscernibles and non negativity axioms imply that only the trivial paths have zero length\footnote{We are therefore excluding a case where we want to include null paths, paths of length zero. These correspond to paths followed by light in relativity contexts. Such null paths have zero proper-time lengths and they play an important role in relativity. To work with such null paths we would have to replace the indiscernibles axiom for length by reflexivity such as used in \eqref{ae:odreflexivity}. This is another possible route but not one we will follow here.}.

Since any walk $\walkseq$ can be defined in terms of the sequence of edges in that walk, $E(\walkseq)$, the additivity axiom of walks \eqref{ae:lengthadd} means that:-   
\begin{lemma}[]\label{l:lengthsumofedges}
	The length $L(\walkseq)$ of a walk $\walkseq = (x_i)_{i=0}^\ell$ is the sum of the length of the edges $E(\walkseq)$ in the walk.
	\begin{equation}
		L(\walkseq) 
		= \sum_{e \in E(\walkseq)} L( e ) 
		= \sum_{i=1}^\ell L\big( (x_{i-1},x_{i}) \big) \, .
		\label{ae:lengthsumofedges}
	\end{equation}
\end{lemma}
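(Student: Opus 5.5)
Lemma~\ref{l:lengthsumofedges} follows by induction on the number of edges $\ell$ in the walk $\walkseq = (x_i)_{i=0}^\ell$, using only the additivity axiom \eqref{ae:lengthadd} and the indiscernibles axiom \eqref{d:lengthindisc} of \defref{d:pathlength}. The key observation is that any walk of length $\ell \geq 1$ decomposes via walk concatenation (\defref{d:walkconcat}) as
\begin{equation*}
	\walkseq = (x_0, x_1, \ldots, x_{\ell-1}) \circ (x_{\ell-1}, x_\ell) \, ,
\end{equation*}
where the first factor is itself a walk, since consecutive points are still joined by edges in $\Ecal$, and the second factor is the single-edge walk corresponding to the edge $\dedge{x_{\ell-1}}{x_\ell}$. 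The end point of the first factor equals the start point of the second, as the definition of concatenation requires.

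\textbf{Base cases.} For $\ell=0$ the walk is trivial, $\walkseq=(x_0)$. Its length is zero by the indiscernibles axiom, and this agrees with the empty sum on the right-hand side of \eqref{ae:lengthsumofedges}. For $\ell=1$ the walk is a single edge and the identity is a tautology.

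\textbf{Inductive step.} Assume \eqref{ae:lengthsumofedges} holds for all walks with $\ell-1$ edges. Additivity applied to the decomposition above gives
\begin{equation*}
	L(\walkseq) = L\big((x_0,\ldots,x_{\ell-1})\big) + L\big((x_{\ell-1},x_\ell)\big) \, .
\end{equation*}
The inductive hypothesis rewrites the first term as $\sum_{i=1}^{\ell-1} L\big((x_{i-1},x_i)\big)$, and adding the final edge term yields the claim. Identifying the single-edge walk $(x_{i-1},x_i)$ with the edge $e\in E(\walkseq)$ then gives the first equality in \eqref{ae:lengthsumofedges}.

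The only point that needs care is the treatment of repeated edges. Walks may traverse the same edge more than once, for example via self-loops or in networks with cycles. The sum over $e\in E(\walkseq)$ must therefore be read as a sum over the \emph{sequence} of edges, counted with multiplicity, not over a set of distinct edges. I will state this explicitly. With that convention fixed, the induction goes through without further obstacle.
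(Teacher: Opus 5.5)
Your proof is correct and takes the same approach as the paper. The paper justifies the lemma in one line: a walk is the concatenation of its sequence of edges $E(\walkseq)$, so the additivity axiom \eqref{ae:lengthadd} applies, and your induction on $\ell$ just spells out this repeated use of additivity. Your remark that the sum is over the sequence of edges, counted with multiplicity, matches the paper's definition $E(\walkseq) = (\dedge{x_{i-1}}{x_i})_{i=1}^{\ell}$ as a sequence.
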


The indiscernibles axiom for lengths \eqref{d:lengthindisc} and \lemref{l:lengthsumofedges} together imply the following.
\begin{corollary}\label{c:lengthwalkinc}
	The length of a walk is always a monotonically strictly increasing function of the number of nodes in the walk.
\end{corollary}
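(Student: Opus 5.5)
The statement to prove is Corollary~\ref{c:lengthwalkinc}: for a walk $\walkseq=(x_0,\dots,x_\ell)$, the length is strictly increasing in the number of nodes. I read this as follows. If $\walkseq^{\prime}$ is obtained from $\walkseq$ by appending further points, so that $\walkseq^{\prime}=\walkseq\circ\walkseq^{\prime\prime}$ with $\walkseq^{\prime\prime}$ non-trivial, then $L(\walkseq^{\prime})>L(\walkseq)$. Equivalently, along any walk the partial lengths $L\big((x_0,\dots,x_k)\big)$ strictly increase with $k$. The argument uses only additivity \eqref{ae:lengthadd}, \lemref{l:lengthofedge} and \lemref{l:lengthsumofedges}.

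First, by \lemref{l:lengthsumofedges}, write the partial length as an edge sum:
\begin{equation*}
	L_k := L\big((x_0,\dots,x_k)\big) = \sum_{i=1}^{k} L\big((x_{i-1},x_i)\big) .
\end{equation*}
For $k=0$ the sum is empty, so $L_0=0$. This agrees with the indiscernibles axiom \eqref{d:lengthindisc} applied to the trivial walk $(x_0)$.

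Next, taking differences gives $L_k-L_{k-1}=L\big((x_{k-1},x_k)\big)$. This is the length of a single edge, and it is strictly positive by \lemref{l:lengthofedge}. Hence $L_0<L_1<\dots<L_\ell$, which is strict monotonicity in the number of nodes, $k+1$. The same conclusion for a general extension $\walkseq\circ\walkseq^{\prime\prime}$ follows directly from additivity: $L(\walkseq\circ\walkseq^{\prime\prime})=L(\walkseq)+L(\walkseq^{\prime\prime})$. Here $L(\walkseq^{\prime\prime})>0$ because $\walkseq^{\prime\prime}$ is non-trivial, using non-negativity together with indiscernibles.

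The only step needing care is interpreting the statement correctly. It cannot mean that any two unrelated walks are ordered by their node counts, since edge lengths may differ and a short walk with one long edge can exceed a longer walk. So I would state explicitly that the comparison is between a walk and its extensions, or prefixes, along the same sequence. Under that reading no real obstacle remains, since positivity of edge lengths does all the work. For the hop length $\Lhop$, where every edge has length one, the stronger statement $L=\ell$ holds, so the comparison between arbitrary walks is also true there.
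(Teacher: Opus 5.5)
Your argument is correct and is essentially the paper's own one-line justification: the non-negativity and indiscernibles axioms make every edge length strictly positive (\lemref{l:lengthofedge}), and \lemref{l:lengthsumofedges} (equivalently, additivity) then shows that each appended node adds a strictly positive increment to the length. Your remark that the comparison must be between a walk and its prefixes or extensions, not between arbitrary walks (hop length being the exception), is the right reading of a statement the paper leaves loosely worded.
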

This seems like a sensible property for the length of a walk to have and so this gives support to the system of definitions given here\footnote{The one possible exception would be null curves in continuous spacetimes (as followed by light in a vacuum) which have zero ``length'' i.e.\ proper time. So as it stands our approach does not include these cases and they do not appear in any of our discrete spacetime models.}.

Let us look at some examples. If our space $\pointset$ is the set of nodes of a graph, the definitions of walk and paths coincide with the definitions used in graph theory and network science. 
In particular, if we use the standard unit weights assigned to edges in a network as the length of each edge, $L(e)=1 \; \forall \; e \in \Ecal$, then the length of any walk is equal to the number of points in the walk minus one. Again that is the default definition of path length in network science \citep[e.g.][sec.7, p.111]{C21}. 

The length of a walk in this case can be the number of edges in the walk, that is one less than the number of points. For a continuous space, such as Euclidean space, the Euclidean distance between two points acts is a suitable length function but so can any of the $L_p$ measures $d_p(x,y)$ of \eqref{ae:lpdef}. 

We can now define the \tsedef{geodesic} $\gamma_L(x,y)$ running between two points $x$ and $y$ in our space $\pointset$ given a path length function $L$.
Qualitatively, the geodesic is the ``path of least resistance'', the most ``direct'' way of getting from $x$ to $y$. Formally, we define the geodesic to be a \emph{path}\footnote{When dealing with minimal geodesics, shortest paths, we could also relax this and consider all walks provided only the trivial walks have zero length (indiscernible axiom for length).} with an extremal length  between two points. This gives two types of geodesic, \tsedef{maximal geodesics} $\gammamax_L(x,y)$ using the longest paths and \tsedef{minimal geodesics} $\gammamin_L(x,y)$ using the shortest paths.
\begin{definition}[Geodesic]\label{d:geodesic}
	A \tsedef{geodesic} $\gamma_L(x,y)$ in a space $\pointset$ with a length function $L: \Pcal \to \Rbb$ defined by \defref{d:pathlength} is a \emph{path} from $x \in \pointset$ to $y \in \pointset$ that has an extremal length between those two points. This gives two types of geodesic, \tsedef{maximal geodesics} $\gammamax_L(x,y)$ and \tsedef{minimal geodesics} $\gammamin_L(x,y)$ where
	\begin{subequations}\label{ae:geodef}
		\begin{align}
			\gammamin_L(x,y) 
			&=
			\big\{
			\gamma_L \; | \;\gamma_L \in \pathset(x,y) , \;\; L(\gamma_L) \leq L(P) \; \forall \; P \in \pathset(x,y) 
			\big\}
			\;\; \text{(shortest path)},
			\label{ae:geosp}
			\\
			\text{or} &
			\nonumber \\
			\gammamax_L(x,y)
			&=
			\big\{
			\gamma_L \; | \; \gamma_L \in \pathset(x,y) , \;\; L(\gamma_L) \geq L(P) \; \forall \; P \in \pathset(x,y) 
			\big\}
			 \;\; \text{(longest path)}.
			\label{ae:geolp}
		\end{align}	
	\end{subequations}	
\end{definition}
Where $\pathset(x,y)$ is the set of all paths from $x$ to $y$. 
It is important that only one of these criteria is used for all pairs of points, we do not define our geodesics using the longest path for some pairs of points while the shortest path is used for the remaining pairs.

Some points to note. 
\begin{itemize}
	\item These definitions of geodesics work for any set $\pointset$, discrete or not, provided this comes equipped with a measure of length $L: \Pcal \to \Rbb$ of \eqref{ae:lengthdef} that is defined for any path between two points in $\pointset$.
	\item We can have completely different geodesics $\gamma_L$ on one network or space depending on the length function $L$ we use.
	\item There can be multiple geodesics between the same pair of points provided the geodesics have the same length. 
	\item There is no assumption that paths are reversible. That is if we have a geodesic $\gamma_L(x,y)$ from $x$ to $y$, this does not imply the existence of a geodesic from $y$ to $x$ and even if one does exist, $\gamma_L(y,x)$ need not be the same path reversed or even the same length as $\gamma_L(x,y)$. 
	\item We always have the trivial path from $x$ to itself which is just the point $x$. These are the only paths with one point, and so, regardless of the extremal condition used to define a geodesic, such trivial paths are the only geodesics of length zero $\gamma_L(x,x)$. This is property satisfies the indiscernible axiom \eqref{d:lengthindisc} for lengths. 
\end{itemize}

The fact that the length of a geodesic from any one point $x$ to any other point $y$ means this can be used as the basis for a distance-like function as follows.
\begin{definition}
	The \tsedef{geodesic distance} $d_L(x,y)$ from point $x$ to point $y$ is defined to be the length $L$ of the geodesic $\gamma_L(x,y)$ from point $x$ to point $y$ provided a path exists from $x$ to $y$
	\begin{equation}
		d_L(x,y) = L(\gamma_L(x,y)) \quad \text{if} \;\; x \twoheadrightarrow y \, .	
		\label{ae:geodistdef}
	\end{equation}
\end{definition}
Clearly we have to specify the length function being used and the maximal/minimal property of the geodesics used to define this $d_L$ function. The uniqueness of the length of a geodesic when it exists is important to ensure this is a function in the formal sense. We will specify the properties of such geodesic distances in our general directed below. To do this we first need to consider some of the properties of geodesics in this directed context.

We start by showing that we can split a geodesic into a concatenation of geodesics.\footnote{The corollary of this lemma is also true. That is if the concatenation of two paths gives a geodesic then the paths are also geodesics.} 
\begin{lemma}\label{l:geosplit}
	Every geodesic can be split at any point into two geodesics. So for any $z \in \gamma_L(x,y)$ where $\gamma_L(x,y) = P_{xz} \circ P_{zy}$ then $P_{xz}$ is a geodesic from $x$ to $z$ and $P_{zy}$ is a geodesic from $z$ to $y$. 
\end{lemma}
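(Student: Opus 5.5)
The argument is a standard cut-and-paste (exchange) argument by contradiction, using only the additivity axiom \eqref{ae:lengthadd} of \defref{d:pathlength} and the definition of geodesic in \defref{d:geodesic}. I treat the minimal case $\gammamin_L$; the maximal case $\gammamax_L$ is the same argument with every inequality reversed. Throughout, the extremal criterion (minimal or maximal) is fixed once for all pairs of points, as the definition requires.

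\emph{Setup.} Let $\gamma_L(x,y)=P_{xz}\circ P_{zy}$ be a geodesic and $z$ a point on it. By \eqref{ae:lengthadd},
\begin{equation*}
L(\gamma_L(x,y)) = L(P_{xz}) + L(P_{zy}).
\end{equation*}
The degenerate splits $z=x$ or $z=y$ are handled first. There one piece is the trivial path, which is the unique path from a point to itself and so trivially a geodesic. The other piece is the whole geodesic.

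\emph{Exchange step.} Suppose $P_{xz}$ is not a geodesic. Then there is a path $Q_{xz}\in\pathset(x,z)$ with $L(Q_{xz})<L(P_{xz})$. Form the walk $Q_{xz}\circ P_{zy}$. By additivity its length is
\begin{equation*}
L(Q_{xz})+L(P_{zy}) < L(\gamma_L(x,y)),
\end{equation*}
which contradicts minimality provided this walk is a path from $x$ to $y$. The argument for $P_{zy}$ is symmetric.

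\emph{Main obstacle.} The difficulty is that $Q_{xz}\circ P_{zy}$ need not be a path: $Q_{xz}$ may revisit points of $P_{zy}$, and \defref{d:geodesic} ranges only over paths. I would handle the cases as follows.
\begin{itemize}
\item In a DAG, or in any ordered space where walks cannot cycle, every walk is a path, as noted after \defref{d:path}. The argument is then complete.
\item In general, for the minimal case, I would shortcut the walk. If a point $w$ repeats, delete the closed sub-walk between its two occurrences. By \lemref{l:lengthofedge} every edge has positive length, so by \corref{c:lengthwalkinc} this deletion strictly decreases the length. Iterating gives a genuine path from $x$ to $y$ that is still strictly shorter than $\gamma_L(x,y)$, which is the required contradiction.
\item For the maximal case, removing cycles shortens the walk rather than lengthening it, so shortcutting does not help. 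Here I would need the acyclicity, or partial order, that the paper assumes from \secref{as:osetdistancemetric} onward, under which every walk is automatically a path. I expect the proof to invoke this assumption, or the footnote's implicit restriction to such settings, at exactly this point.
\end{itemize}
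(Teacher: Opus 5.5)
Your argument is the same cut-and-paste exchange the paper uses, and the paper flags the same obstacle: $P^{\prime\prime\prime}\circ P^{\prime\prime}$ may share points beyond $z$ and so fail to be a path. The difference is in what happens next.

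The paper does not invoke acyclicity at that point, contrary to your expectation. It simply observes that $L(P^{\prime\prime\prime}\circ P^{\prime\prime}) = L(P^{\prime\prime\prime}) + L(P^{\prime\prime})$ is more extreme than $L(\gamma_L(x,y))$ and declares a contradiction. That step is valid only if geodesics are extremal over walks, whereas \defref{d:geodesic} makes them extremal over paths.

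Your treatment actually closes this step in both cases:
\begin{itemize}
\item \emph{Minimal case.} Deleting closed sub-walks, each of strictly positive length by \lemref{l:lengthofedge} and additivity, gives a genuine path from $x$ to $y$ that is still strictly shorter than $\gamma_L(x,y)$.
\item \emph{Maximal case.} Your restriction to acyclic or ordered settings, where every walk is automatically a path, is necessary and not just convenient.
\end{itemize}

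Without acyclicity the maximal version is false. Take the directed graph with edges in both directions around the $4$-cycle $q,r,t,s$, with unit edge lengths. Then $q \rightarrow r \rightarrow t$ is a longest path from $q$ to $t$. Its sub-path $q \rightarrow r$ is not a longest path from $q$ to $r$, because $q \rightarrow s \rightarrow t \rightarrow r$ is longer.

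So your proposal proves exactly the cases in which the lemma holds: all minimal geodesics, and maximal geodesics in DAGs and other ordered spaces. This includes the DAG setting the paper actually relies on. The paper's own proof, by contrast, leaves the walk-versus-path step unjustified, and that step cannot be repaired in general for maximal geodesics.
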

\begin{proof}
	Suppose we have split the geodesic at a point $z$ to give $\gamma_L(x,y) = P^\prime \circ P^{\prime\prime}$ where the path $P^\prime$ starts at $x$ and ends at $z$ while path $P^{\prime\prime}$ starts at $z$ and ends at $y$. Since the geodesic is a path with no repeated points, the subsequences $P^\prime$ and $P^{\prime\prime}$ are also paths with a sequences of distinct points. 
	
	Consider another path $P^{\prime\prime\prime}$ from $x$ to $z$ and suppose $P^{\prime\prime\prime}$ had a more extreme length than  $P^{\prime}$, 
	so either $L(P^{\prime\prime\prime}) < L(P^{\prime\prime\prime})$ if we use shortest paths \eqref{ae:geosp}
or $L(P^{\prime\prime\prime}) > L(P^{\prime\prime\prime})$ if we use longest paths \eqref{ae:geolp}. This means that $P^{\prime\prime\prime}$ is possible geodesic between $x$ and $z$ and certainly $P^{\prime}$ is not a geodesic. However, there is no guarantee that $P^{\prime\prime\prime}$ does not share many points with $P^{\prime\prime}$ in addition to $z$, $|P^{\prime\prime\prime} \cap P^{\prime\prime}|>1$. Then we can not combine $P^{\prime\prime\prime}$ and $P^{\prime\prime}$ to produce a path needed for potential geodesics.


	However, the properties of the lengths means $L(P^{\prime\prime\prime} \circ P^{\prime\prime})=L(P^{\prime\prime\prime}) + L(P^{\prime\prime})$ and based on our criteria this must be a more extreme length than the origional path. That then  contradicts the initial assumption that $\gamma_L(x,y)$ is a geodesic with the most extreme value of length possible. So there can be no path $L(P^{\prime\prime\prime})$ from $x$ to $z$ which has a more extreme length than the original $L(P^{\prime})$. This shows that $L(P^{\prime})$ is a geodesic\footnote{Note we do not exclude the possibility that $L(P^{\prime})=L(P^{\prime\prime\prime})$ in which case $P^{\prime\prime\prime}$ is also a geodesic from $x$ to $z$.} from $x$ to $z$. 
	
	A similar argument applies to the later part, $P^{\prime\prime}$, that tells this also has the best possible length. So we conclude that we can split any geodesic into two sub-sequences, sub-paths, and these subpaths are guaranteed to be geodesics. That is for any geodesic $\gamma_L(x,y)$ we have that 
	\begin{equation}
		\gamma_L(x,y) = \gamma_L(x,z) \circ \gamma_L(z,y) \quad \forall \; z \in \gamma_L(x,y) \, .
	\end{equation}
\end{proof}

We can also deduce the following property:
\begin{lemma}\label{l:geo2}
	If we find a point $z$ that lies on some path $P_{xy}$ from $x$ to $y$, so that $P_{xy} = P^{\prime}_{xz} \circ P^{\prime\prime}_{zy}$, then $d_L(x,y)   =    d_L(x,z) + d_L(z,y)$ if and only if $z$ lies on a geodesic from $x$ to $y$.
\end{lemma}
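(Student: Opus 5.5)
We prove the two directions separately, using \lemref{l:geosplit} for the ``if'' direction and a concatenation argument for the ``only if'' direction. Throughout, $d_L$ is the geodesic distance of \eqref{ae:geodistdef} built from one fixed extremal criterion (all minimal or all maximal). $z$ lies on a path from $x$ to $y$, so $x \twoheadrightarrow z$ and $z \twoheadrightarrow y$, and hence $d_L(x,z)$ and $d_L(z,y)$ are both defined.

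\emph{If.} Suppose $z$ lies on some geodesic $\gamma_L(x,y)$. By \lemref{l:geosplit} we can write $\gamma_L(x,y) = \gamma_L(x,z)\circ\gamma_L(z,y)$, where both pieces are geodesics. The additivity axiom \eqref{ae:lengthadd} then gives
\[
d_L(x,y) = L(\gamma_L(x,y)) = L(\gamma_L(x,z)) + L(\gamma_L(z,y)) = d_L(x,z)+d_L(z,y).
\]
This is immediate.

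\emph{Only if.} Suppose $d_L(x,y)=d_L(x,z)+d_L(z,y)$. Pick geodesics $\gamma_L(x,z)$ and $\gamma_L(z,y)$ and form $W=\gamma_L(x,z)\circ\gamma_L(z,y)$. By additivity, $L(W)=d_L(x,y)$, so $W$ attains the extremal length among paths from $x$ to $y$. It remains to show that $W$ is a \emph{path} and not merely a walk; then $W$ is a geodesic through $z$.

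This is the main obstacle: the two pieces might share a point other than $z$. In the settings relevant to the paper, namely DAGs and ordered spaces, this cannot happen. A repeated point $w\neq z$ would give walks $w\twoheadrightarrow z\twoheadrightarrow w$, that is a cycle, which is excluded. In this case every walk is a path, as noted after \defref{d:path}, and the proof is complete.

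For a general directed graph I would argue as follows.
\begin{itemize}
\item With shortest-path geodesics, suppose $W$ repeats a point. Deleting the closed sub-walk between the two occurrences gives a path from $x$ to $y$ that is strictly shorter, by Corollary~\ref{c:lengthwalkinc} and \lemref{l:lengthofedge}. This contradicts $L(W)=d_L(x,y)$ being minimal, so $W$ must be a path.
\item With longest-path geodesics, if the graph has cycles, I would either restrict the statement to the acyclic case or argue by removing cycles. The cleanest plan is to state that we assume acyclicity there, which is the setting in which longest paths are used in the paper.
\end{itemize}
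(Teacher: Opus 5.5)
Your proof is correct and follows the paper's route: \lemref{l:geosplit} plus additivity for the direction ``$z$ on a geodesic $\Rightarrow$ equality'', and the concatenation $\gamma_L(x,z)\circ\gamma_L(z,y)$ for the converse. Your check that this concatenation is a genuine path (automatic under acyclicity, or by cycle deletion for minimal geodesics) fills a step the paper's proof simply asserts; the acyclicity restriction for maximal geodesics is genuinely necessary and covers both directions, since for longest paths in graphs with cycles both \lemref{l:geosplit} and the present lemma can fail.
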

Again, this is simply follows from the fact that geodesics are paths with extremal values for their lengths. 
\begin{proof}
	Suppose $d_L(x,y)   =    d_L(x,z) + d_L(z,y)$. Since we have a path from $x$ to $z$ that then continues onto $y$ we have two geodesics $\gamma_L(x,z)$ from $x$ to $z$ and  $\gamma_L(z,y)$ from $z$ to $y$. If we concatenate these geodesics $\gamma_L(x,z) \circ \gamma_L(z,y)$ then the length of this combined path is equal to the extremal length of any geodesics from $x$ to $z$. That is, this combined path $\gamma_L(x,z) \circ \gamma_L(z,y)$ is a geodesic from $x$ to $y$ that passes through point $z$ as required.
	
	The reverse case states that if $z$ lies on a geodesic from $x$ to $y$, say $\gamma_L(x,y)$, then $d_L(x,y)   =    d_L(x,z) + d_L(z,y)$. If $z$ lies on a geodesic then we can use \lemref{l:geosplit} to show that we can spit the original geodesic $\gamma_L(x,y)$ into two parts, $\gamma_L(x,z)$ and $\gamma_L(y,z)$, each geodesics themselves. That is $\gamma_L(x,y) = \gamma_L(x,z) \circ \gamma_L(y,z)$. The additivity of path lengths over concatenation then gives us that $d_L(x,y)   =    d_L(x,z) + d_L(z,y)$ as required.
\end{proof}
Note that the path here $P_{xy}$ need not itself be a geodesic. All it does is ensure that there is connectivity from $x$ to $y$ via $z$. 
Also, in the proof of \lemref{l:geo2}, if $z$ is not on a geodesic then we must have strict inequalities from the triangle identities.
\begin{corollary}
	Suppose we find a point $z$ that lies on some path $P_{xy}$ from $x$ to $y$ which is \emph{not} a geodesic. Then we know that $d_L(x,y)  <   d_L(x,z) + d_L(z,y)$ if $\gamma_L(x,y)$ is a minimal geodesic or $d_L(x,y)  >   d_L(x,z) + d_L(z,y)$ if $\gamma_L(x,y)$ is a maximal geodesic.
\end{corollary}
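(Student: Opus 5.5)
The plan is to argue by contraposition, using \lemref{l:geo2} together with the definition of the geodesic distance $d_L$ in \eqref{ae:geodistdef} as an extremal value of $L$ over $\pathset(\cdot,\cdot)$. I take the hypothesis to mean that $z$ lies on some path $P_{xy}=P^{\prime}_{xz}\circ P^{\prime\prime}_{zy}$ but on no geodesic from $x$ to $y$. Under that reading, \lemref{l:geo2} immediately gives $d_L(x,y)\neq d_L(x,z)+d_L(z,y)$. It then remains to fix the sign of the discrepancy, and this depends on whether minimal or maximal geodesics are in use.

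First I establish the weak inequalities, taking the minimal case. Choose geodesics $\gamma_L(x,z)$ and $\gamma_L(z,y)$; they exist because $x\twoheadrightarrow z$ and $z\twoheadrightarrow y$ via the given path. Their concatenation $W=\gamma_L(x,z)\circ\gamma_L(z,y)$ is a walk from $x$ to $y$, and by additivity \eqref{ae:lengthadd} it has length $d_L(x,z)+d_L(z,y)$. If $W$ is a path, then $W\in\pathset(x,y)$, and the definition \eqref{ae:geosp} gives $d_L(x,y)\le L(W)$. The maximal case works the same way with the inequality reversed, using \eqref{ae:geolp}. Combining this weak inequality with the inequality $\neq$ from \lemref{l:geo2} yields the strict inequality in each case.

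The obstacle is that $W$ may revisit points, so it need not be a path. This is the same subtlety flagged in the proof of \lemref{l:geosplit}.

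For the minimal case I would remove cycles. Excising a closed subwalk from $W$ leaves a shorter walk from $x$ to $y$, by \lemref{l:lengthsumofedges} and the positivity of edge lengths in \lemref{l:lengthofedge}. The result is a genuine path $P$ with $L(P)\le L(W)$, so the inequality $d_L(x,y)\le L(W)$ survives.

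For the maximal case cycle removal pushes the length the wrong way, so I would restrict to the setting where maximal geodesics are actually used, namely DAGs and ordered spaces. There the reachability relation is a partial order, so every walk is automatically a path and $W\in\pathset(x,y)$ directly.

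In a general cyclic space the maximal statement can genuinely fail, and longest paths may not even be well behaved there. So I would state the restriction to acyclic or ordered settings explicitly for the maximal half, and rely on the cycle-excision argument for the minimal half.
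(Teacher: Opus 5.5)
This is essentially the paper's argument: the paper justifies the corollary only by remarking that, by \lemref{l:geo2}, equality $d_L(x,y)=d_L(x,z)+d_L(z,y)$ is impossible when $z$ lies on no geodesic, so the (reverse) triangle identity must then hold strictly. That is exactly your structure of the contrapositive of \lemref{l:geo2} plus a weak inequality, and you also read the hypothesis the same way the paper intends. Your extra care is a real improvement, because the paper tacitly assumes $\gamma_L(x,z)\circ\gamma_L(z,y)$ is a path, both here and in its proof of \lemref{l:geo2}: cycle excision settles the minimal case in general, and restricting the maximal case to acyclic or ordered settings is genuinely necessary, since in a cyclic graph one can have $d_L(x,z)+d_L(z,y)>d_L(x,y)$ for longest-path distances even when $z$ lies on no longest $x$--$y$ path.
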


The last step in this section is to look at the properties of the length of a geodesic $d_L(x,y)$. However, for our very general systems of undirected and directed networks, continuous space and continuous spacetimes, we need a weaker definition of distances as compared to the formal definitions of distance in \defref{d:distance} and metric in \defref{d:metric}.  So we will first define a \tsedef{generalised distance} $d$ before showing that the length of any geodesic\footnote{From the definition of a geodesic, all geodesics between two given points have the same length.} between two points gives satisfies this weaker definition of ``distance''. 

A \tsedef{generalised distance} is a map $d$ from any two connected points to real values that satisfies the following axioms.
\begin{definition}\label{d:gdistance}
	A \tsedef{generalised distance} $d$ from point $x$ to a point $y$ that is connected to $x$, $x \twoheadrightarrow y$,  is a function $d: \Ccal \mapsto \Rbb$ which obeys
	\begin{subequations}
		\begin{align}
			\text{Reality:} \quad & d(x,y) \in \Rbb  , \\
			\text{Non-negativity:} \quad & d(x,y) \geq 0  , \\
			\text{Indiscernibles:} \quad & d(x,y)   =    0   \;\; \text{iff} \;\; x=y .
		\end{align}
		\label{ae:gdistancedef}
	\end{subequations}
\end{definition}
Formally, the set of connected pairs is $\Ccal = \{ (x,y) | x \twoheadrightarrow y\} \subseteq \pointset \times \pointset$.
Note that the value of a generalised distance from $x$ to $y$ is not defined if the points are not connected in that direction\footnote{It maybe useful to extend the definition in some circumstances to give values when $x$ and $y$ are not connected but we do not use that option here.}, if there is no path from $x$ to $y$.  Also, a generalised distance need not be symmetric; it may be defined in one direction but not the other or it may have two different values so that $d(x,y) \neq d(y,x)$.

In the same way, we can add some sort of triangle identity to those of a generalised distance in \defref{d:gdistance} to get what we call a \tsedef{generalised metric}, a map from connected pairs of points to real numbers that satisfies the following axioms 
\begin{definition}\label{d:gmetric}
	A \tsedef{generalised metric} $d$ is a function of a pair of connected points, $d: \Ccal \mapsto \Rbb$, which obeys the following axioms
	\begin{subequations}\label{ae:gmetric}
		\begin{align}
			\text{Non-negativity:} \quad & d(x,y) \geq 0  , \\ 
			\text{Indiscernibles:} \quad & d(x,y)   =    0   \;\; \text{iff} \;\; x=y, \\ 
			\text{Triangle Identity:} \quad & d(x,y)   \leq d(x,z)   + d(z,y)    \;\; \forall \; x \twoheadrightarrow z \twoheadrightarrow y, 
			\label{ae:triiddef} \\
			\text{or} \nonumber \\
			\text{Reverse Triangle Identity:} \quad & d(x,y) \geq d(x,z)   + d(z,y)    \;\; \forall \; x \twoheadrightarrow z \twoheadrightarrow y.
			\label{ae:revtriiddef}
		\end{align}
	\end{subequations}
\end{definition}
Again, for any given generalised metric the function either satisfies the triangle identity \eqref{ae:triiddef} for every pair of connected points, or it always satisfies the reverse triangle identity \eqref{ae:revtriiddef} for every pair of connected points. 
Note we do not necessarily have symmetry so this is not a distance in the standard sense of \defref{d:distance}. 
So even if $d(x,y)$ has a value, $d(y,x)$ may not be defined or it can have a different value.
The additivity property of path length functions \eqref{ae:lengthdef} is also not carried over but is replaced by the triangle inequalities \eqref{ae:triiddef} or \eqref{ae:revtriiddef} as we shall show.

\begin{lemma}
	The length $L(\gamma_L(x,y))$ of any geodesic $\gamma_L(x,y)$ is a \tsedef{generalised metric} $d_L(x,y) = L(\gamma_L(x,y))$.
\end{lemma}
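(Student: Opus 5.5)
We check the axioms of \defref{d:gmetric} directly for $d_L(x,y)=L(\gamma_L(x,y))$, defined on connected pairs $(x,y)\in\Ccal$. We treat minimal geodesics \eqref{ae:geosp} and maximal geodesics \eqref{ae:geolp} in parallel. We expect minimal geodesics to give the triangle identity \eqref{ae:triiddef} and maximal geodesics to give the reverse triangle identity \eqref{ae:revtriiddef}. Before anything else, $d_L$ must be a well defined function on $\Ccal$. We note that all geodesics between the same pair of points share the same extremal length, so $d_L(x,y)$ does not depend on which geodesic is chosen. Existence of a geodesic whenever $x\twoheadrightarrow y$ we take as given by the framework (it is automatic for finite DAGs, since $\pathset(x,y)$ is then finite and non-empty). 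For continuous spaces we simply assume the extremum is attained, as the paper does for the diagonal.

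Reality and non-negativity follow at once from the corresponding axioms of \defref{d:pathlength} applied to the path $\gamma_L(x,y)$. For indiscernibles there are two directions.
\begin{itemize}
\item If $x=y$, the only path from $x$ to $x$ is the trivial path $(x)$ (in a DAG or ordered space there are no non-trivial closed paths, and paths have distinct points by \defref{d:path}). So $\gamma_L(x,x)=(x)$ and $d_L(x,x)=0$.
\item If $d_L(x,y)=0$, then the geodesic has zero length, so by \eqref{d:lengthindisc} it is a trivial path $(x)$. Hence $y=x$.
\end{itemize}

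The main step is the triangle identity. Take $x\twoheadrightarrow z\twoheadrightarrow y$, and choose geodesics $\gamma_L(x,z)$ and $\gamma_L(z,y)$. Their concatenation $W=\gamma_L(x,z)\circ\gamma_L(z,y)$ is a walk from $x$ to $y$. By additivity \eqref{ae:lengthadd} it has length $L(W)=d_L(x,z)+d_L(z,y)$.

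The obstacle is that \defref{d:geodesic} takes extrema over \emph{paths}, while $W$ may repeat points. We handle this separately in the two regimes.
\begin{itemize}
\item In the acyclic/ordered setting where the result is actually used, every walk is a path, so $W\in\pathset(x,y)$. Extremality then gives $d_L(x,y)\le L(W)$ for minimal geodesics and $d_L(x,y)\ge L(W)$ for maximal ones. These are exactly \eqref{ae:triiddef} and \eqref{ae:revtriiddef}.
\item For the minimal case in a general directed graph, we remove any cycles from $W$ to obtain a path $P\in\pathset(x,y)$. By \lemref{l:lengthsumofedges} and \lemref{l:lengthofedge}, deleting edges strictly lowers the length, so $L(P)\le L(W)$ and hence $d_L(x,y)\le L(P)\le L(W)$.
\item For the maximal case with cycles, no such reduction is available, and longest geodesics need not be well behaved. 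Here we would restrict the statement to acyclic spaces, which is the setting of \secref{as:osetdistancemetric}, and state that restriction explicitly.
\end{itemize}

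Finally, for any fixed choice of minimal or maximal geodesics the corresponding inequality holds for all connected triples. This is the uniformity requirement in \defref{d:gmetric}, and it completes the argument.
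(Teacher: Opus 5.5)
Your proof is correct and follows essentially the same route as the paper's, which likewise takes non-negativity and indiscernibles from the length axioms of \defref{d:pathlength} and obtains the triangle or reverse triangle identity from the extremality of geodesics, citing \lemref{l:geo2}. You are more careful than the paper in noting that $\gamma_L(x,z)\circ\gamma_L(z,y)$ is a priori only a walk, and your restriction of the maximal case to acyclic spaces is necessary, not just convenient: if $x\twoheadrightarrow z\twoheadrightarrow x$ with $z\neq x$, the reverse identity would require $0=d_L(x,x)\geq d_L(x,z)+d_L(z,x)>0$.
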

Note that the length function $L$ must be the same path-length function used to define the geodesics $\gamma_L$ in \defref{d:geodesic}.
\begin{proof}
	First we note that geodesics exist if and only there is a path from $x$ to $y$. So the length of a geodesic has the correct domain and codomain needed for a generalised distance function. 
	
	Next, the axioms of a length function in \defref{d:pathlength} include all the properties required by a generalised distance as defined in \defref{d:gdistance}. 
	
	Finally, the extremal properties of the lengths of geodesics that are the core of the definition of a geodesic in \defref{d:geodesic} ensure that they obey a triangle or reverse triangle inequality as noted in \lemref{l:geo2}.
\end{proof}

\subsection{Order, sets, distance and metrics}\label{as:osetdistancemetric}

Our work is with directed acyclic graphs, so we need to generalise the concepts of distance and metric further to take account of situations where there is an order inherent in the network. As before, we can extend these ideas to continuous spaces by defining an edge to exist whenever there is a curve from one point to another in a continuous space. Minkowski spacetime then provides the best known reference point and test case for these concepts.  First, we start with a reminder of how order in a set is described.

\subsubsection{Generic Partial Orders}\label{as:partialorder}

A \href{https://en.wikipedia.org/wiki/Partially_ordered_set#Formal_definition}{\tsedef{partial order}} $\preceq$ on a set $\pointset$ (sometimes called the \tsedef{ground set} of the poset) is a binary relation which satisfies three conditions:
\begin{subequations}
	\begin{align}
		\text{reflexive:}    \; & x \preceq x \, , 
		\label{e:poreflexivity}
		\\
		\text{antisymmetry:} \; & x \preceq y \; \text{and} \; y \preceq x \; \text{iff} \; x=y \,, 
		\label{e:poantisymmetry}
		\\
		\text{transitive:}   \; & \text{if} \; x \preceq y \; \text{and} \; y\preceq z \; \text{then} \; x \preceq z \, .
		\label{e:potransitivity}
	\end{align}
	\label{ae:porderdef}
\end{subequations}
where these hold for any $x,y,z \in \pointset$. The ground set $\pointset$ and the binary relations $\preceq$ obeying \eqref{ae:porderdef}  together form a \tsedef{partially ordered set} or \tsedef{poset} which we can denote as $\Pcal = (\pointset,\preceq)$. 
Note that $y \succeq x$ is an alternative notation for $x \preceq y$. 

It will be useful sometimes to define a related relationship between two objects in the set, $\prec$, which is the same as the binary relation of a partial order $\preceq$ except that we exclude the reflexive case where $x = y$.  That is 
\beq 
x \prec y \quad \text{if} \quad x \preceq y  \;\; \text{and} \;\; x \neq y \, .
\eeq 
Again, $y \succ x$ means the same as $x \prec y$. 

\subsubsection{Partial Orders in this Paper}\label{as:partialorderhere}

In this paper we have two partial orders. We use the symbol $\trianglelefteq$ for the partial order defined on points in $\Rbb^D$ using the cube spacetime rule of \eqref{ae:cubespacerule} and we use the symbol $\preceq$ for the partial order on nodes of a DAG defined by the existence of a path from one node to another. That is 
\begin{equation}
	\forall \; u,v \in \Vcal \quad 
	\begin{cases}
		u     \preceq v & \Leftrightarrow \quad  \text{a path from node }u\text{ to node }v\text{ exists,} \\
		u \not\preceq v & \Leftrightarrow \quad \text{no path from node }u\text{ to node }v\text{ exists.}
	\end{cases}
	\label{ae:preceqdef}
\end{equation}
In \secref{as:pggd} we used $x\twoheadrightarrow y$ to indicate that a path exists from $x$ to $y$ so both our binary relations in this section, $x \trianglelefteq y$ and $x \preceq y$, also imply  $x\twoheadrightarrow y$. We have changed the notation to emphasise the fact that reachability when there are no cycles has additional implications. In particular, one can show that the cubespace rule $\trianglelefteq$ of \eqref{ae:cubespacerule} and node reachability in a DAG $\preceq$ always satisfy the criteria \eqref{ae:porderdef} for a partial order relation whereas in the previous section our reachability relation $\twoheadrightarrow$ need not.

\subsubsection{Ordered distance and metrics}\label{as:odistancemetric}

In our work with directed acyclic graphs, we need to define weaker versions of the standard mathematical concepts of distance and metric that can work with spaces with order, what we call spacetimes. These concepts will also be useful where working with continuous spacetimes such as Minkowski spacetime and de Sitter spacetime.

An \tsedef{ordered distance} $d_\mathrm{o}(x,y)$ between any two points $x,y\in \pointset$ in a partially order set $(\pointset,\trianglelefteq)$ satisfies 
\begin{subequations}
	\begin{align}
		\text{connectivity}   & \;\; d_\mathrm{o}(x,y) \; \text{is defined only if} \; 
		x \trianglelefteq y \, , 
		\label{e:connectivity} 
		\\
		\text{non-negativity} & \;\; d_\mathrm{o}(x,y) \geq 0 \;  \text{if} \;  x \trianglelefteq y \, ,
		\\
		\text{reflexivity}    & \;\; d_\mathrm{o}(x,x)=0 \, .
		\label{ae:odreflexivity}
	\end{align}
	\label{ae:odistancedef}
\end{subequations}
Note that for any pair of points with no order, that is $x \not\trianglelefteq y$, then an ordered distance $d_\mathrm{o}(x,y)$ is not defined\footnote{A different approach is used for Lorentzian length spaces as defined by \citet{KS18}. There, in Definition 2.8, they work with ``time separation function'' which plays the role of our ordered distance $d_\mathrm{o}$.  A time separation function is also defined in terms of operators providing an order such that its is real and positive when $x \triangleleft y$ (in our notation) but set to zero when $x \not\!\triangleleft\, y$. Note that this second case contains both $x=y$ and $x \not\trianglelefteq y$ cases. So the difference with our definition is for the unconnected pairs of points (space-like separated), we leave that case as ``undefined'' while it is zero for a time separation function. It seems likely that assigning a real value for pairs of points with no direct connection is a matter of convenience since the lack of direct connection means one might not expect there to be a natural distance. We prefer to keep the value for such connected pairs to be undefined so as to be as general as possible.}. This means that for any pair of points either $d_\mathrm{o}(x,y)$ or $d_\mathrm{o}(y,x)$ is not defined or both are not defined. 

We can have two types of ordered metrics. An \tsedef{ordered metric} is an ordered distance that satisfies a \tsedef{ordered triangle identity}
\beq
	d_\mathrm{o}(x,z) \leq d_\mathrm{o}(x,y) + d_\mathrm{o}(y,z) 
	\, ,
	\quad  \text{where} \quad 
	x \trianglelefteq y \trianglelefteq z \, .
	\label{ae:omtridef}
\eeq
This is weaker than the usual triangle identity for a normal metric since in this case the inequality only needs to hold for one permutation of the points $x$, $y$ and $z$. For instance we do not need $d_\mathrm{o}(x,z) + d_\mathrm{o}(y,z)$ to have any relation to $d_\mathrm{o}(x,y)$ .

An \tsedef{ordered reverse metric} is an ordered distance that satisfies the \tsedef{reverse ordered triangle identity}
\beq
	d_\mathrm{o}(x,z) \geq d_\mathrm{o}(x,y) + d_\mathrm{o}(y,z) 
	\, ,
	\quad  \text{where} \quad 
	x  \trianglelefteq y \trianglelefteq z \, .
	\label{ae:ormtridef}
\eeq
Again, note the order of the points is important.

Finally, whenever we have an ordered metric, we can always define a \tsedef{ordered geodesic} $\gamma_\mathrm{o}(x,y)$ from point $x$ to point $z$ to be a set of points $z$ which satisfy the equality of the relevant ordered triangle inequality \eqref{ae:omtridef} or \eqref{ae:ormtridef}, that is
\begin{equation}
	\gamma_\mathrm{o}(x,z) = \{ y \; | \; d_\mathrm{o}(x,z) = d_\mathrm{o}(x,y) + d_\mathrm{o}(y,z) \} 
	\label{e:omgeodesic}
	\, .
\end{equation}
Following this nomenclature, we should refer to minimal ordered geodesics and maximal ordered geodesics depending on the type of ordered triangle identity satisfied by the geodesics.

\subsubsection{Proper Time}\label{as:propertime}

In relativity, a common notation and choice of coordinates for a point in spacetime is $x_\mu = (x_0, x_1, \ldots, x_{D-1})$ where the zero-th coordinate $x_0$ is the time and the remaining $(D-1)$ coordinates are the spatial coordinates. 
A useful measure of the separation of points $x$ and $y$ is given by the proper-time $\tau(x,y)$ provided there is a causal connection between the two point\footnote{This defines a partial-order where $x \preceq  y$ indicates that light or some massive particle can travel from $x$ to $y$. In our work, the use of the indiscernible axiom in our definitions means we would exclude the null paths, paths of zero proper-time length, when defining the binary relation of our poset.} The general form for the proper-time along a path is found by integrating the infinitesimal form
\bea
 	d\tau^2 
 	&=& 
 	\sum_{\mu=0}^{D-1}\sum_{\nu=0}^{D-1} g^{\mu\nu} dx_\mu dx_\nu \, .
 	\label{ae:dtaudef}
\eea
Here the matrix $g^{\mu\nu}$ is the \tsedef{metric tensor}\footnote{Confusingly, in much of the literature on relativity, this name is often shorten to just the ``metric''. However $g^{\mu\nu}$ is \emph{not} a metric in the sense defined in \eqref{ae:metricdef} and so we will call $g^{\mu\nu}$ the ``metric tensor''.} and this encodes information about the curvature of the spacetime. One of the most important properties of proper time is that it is invariant under the symmetries of the space. That is all observers agree on the same value of proper time for two points (spacetime events) $x$ and $y$ and this means they all agree that the physical outcomes are the same. 

The simplest example is Minkowski spacetime, which has no curvature but maximal symmetry (it is homogeneous and isotropic, so distances between pairs of points are unchanged by rotations or translations) in which case we can choose coordinates and conventions such that
\beq
	g^{\mu\nu}
	=
	\begin{pmatrix}
		+1 &  0 &  0 & 0 \\
		0  & -1 &  0 & 0 \\
		0  &  0 & -1 & 0 \\
		0  &  0 &  0 & -1 
	\end{pmatrix}.
\eeq
Then the proper time separation $\tau(x,y)$ between two points $x$ and $y$ in Minkowski spacetime is given by a formula for the square $\tau^2$. For standard coordinates this is
\beq
	\tau^2(x,y) 
	= \big( (x-y)_0 \big)^2 - \sum_{i=1}^{D-1} \big((x-y)_i\big)^2 \, .
	\label{e:propertimemst}
\eeq
Note that the minus sign on the spatial components $(x-y)_i$ for $i=1,2,\ldots,(D-1)$ leads to the result that means that $\tau^2(x,y)$ is only non-negative when no physical particle can travel between the two points. Further, by convention, when points are connected we always define the proper time difference to be the positive square root so $\tau(x,y)\geq 0$.  Otherwise this formula can give negative and imaginary values for $\tau$. So clearly, the standard definition of distance \eqref{ae:distancedef} with its non-negative values cannot apply here.

While proper time may not be a distance \eqref{ae:distancedef} or a metric  \eqref{ae:metricdef}, we can use proper time $\tau(x,y) \geq 0$ when there is a path for a massive free particle to move from $x$ to $y$, when they are causally connected to define what we call here an ordered distance $d_L(x,y)$ satisfying \eqref{ae:odistancedef} for these relativistic (Lorentzian) spaces where\footnote{Our use of the indiscernibles condition means we are excluding null paths here, paths followed by massless particles.} 
\begin{equation}
	d_\mathrm{O}(x,y)
	= 
	\begin{cases}
		       \tau(x,y) & \text{if} \; \tau^2(x,y) >0 \;\; \text{and} \;\; x_0 < y_0 , \\
		               0 & \text{if} \; x =  y , \\
		\text{undefined} & \text{otherwise}.
	\end{cases}
	\label{ae:ldistancedef}
\end{equation}

\section{The \Lp{}-cube spacetime model}

We work with Minkowski distance in the context of our \Lp{}-cube spacetime models where there is a cutoff $R$ in the length of any edge in terms of the distance as measured using Minkowski distance. Since we prefer to compare results for networks with the same average degree, similar density of edges, we need an expression for the degree in terms of this cutoff $R$. So the aim here is to find $\Omega_{p,D}(R)= |\Xcal^+(x,R)|$ the volume of the forward connection kernel $\Xcal^+(x,R)$ of any point $x$ in $D$-dimensional \Lp{}-cube spacetime. 

The forward connection kernels $\Xcal^+(x,R) = \{ y | d_p(x,y) \leq R \}$ for the various $p$ values are shown in \figref{f:dp-ball}.
\begin{figure}[H]
	\centering
	\includegraphics[width=\linewidth]{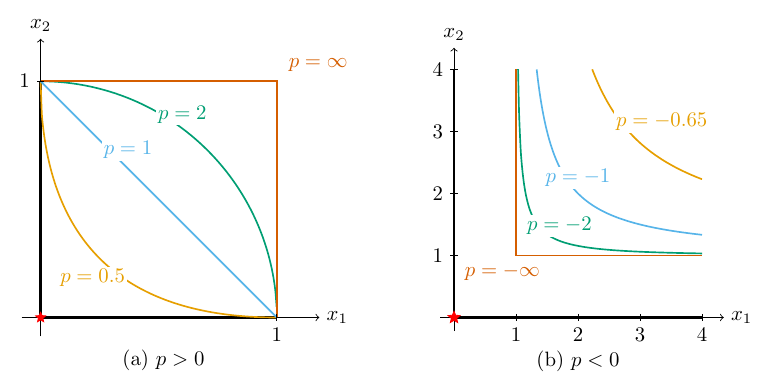} 
	\caption{For the \Lp{}-cube model in two dimensions, the node at the origin (red star) can only connect to nodes in the forward connection kernel with unit cutoff, $\Xcal^+(x=0,R=1)= \{ y | d_p(y,0) \leq 1 \}$. One boundary of the forward connection kernel $\Xcal^+(x=0,R=1)$ is given by the edge of the unit balls $d_p(0,x)=1$ and these are shown in colour for various $p$. The thick black lines show the other two sides of the boundary of the forward connection kernel, $x_1=0$ and $x_2=0$ and these are same for all $p$. On the left we show positive $p$ values which illustrates that the region containing connected points has a finite area. For negative $p$, we see the area is always infinite, which mimics the size of the future region (forward light cone) in Minkowski space even when bounded by an upper limit on proper-time as this is also infinite. Also note that the curvature of the unit ball, as viewed from the origin, is convex for $p>1$ and concave for $p<1$.
		}
	\label{f:dp-ball}
\end{figure}

\subsection{Volume of Forward Connection Kernel in \Lp{}-cube spacetime}\label{as:kout}

For $p>0$ the boundaries of the forward connection kernel $\Xcal^+(x=0,R)$ of the point at the origin with a length cutoff $R$ are 
\begin{subequations}
	\begin{align}
		x_i&\geq 0
		\label{ae:cubespacerule}
		\\
		R^p&\geq \sum_{i=1}^D (x_i)^p 	
		\label{ae:lengthcutoff}
	\end{align}
\end{subequations}
where the first constraint comes from the cube-space rule and the second from the length cutoff.

Therefore we have that $\Omega_{p>0,D=1}=R$ and
\beq
	\Omega_{p>0,D}(R)
	=
	\int_0^R dx_D \, \Omega_{D-1}( [R^p-(x_{D})^p ]^{1/p} ) 
	\quad \text{if} \;\; D>1
	\, .
\eeq
By dimensions we know that $\Omega_{p>0,D}(R) = R^D\Omega_{p>0,D}(1)$ so we work in terms of the variable $Y$ where
\beq
	Y   = (1-(x_D/R)^p) \, , \quad 
	x_D = R(1-Y)^{1/p}  \, , \quad 
	\frac{dY}{dx_D}  = -\frac{p}{R^p} (x_D)^{p-1} \, .
\eeq
This gives
\bea
	\Omega_{p>0,D}(R)
	&=&
	\int_0^1 dY  \,  \frac{R^p}{p(x_D)^{p-1}}         \Omega_{p>0,D-1}( R Y^{1/p} ) 
	\\
	&=&
	\frac{1}{p}  \Omega_{p>0,D-1}( 1 )  \int_0^1 dY  \,   Y^{(D-1)/p} (1-Y)^{(1/p)-1} \, .
\eea
We recognise the integral as the Beta function
\beq
	B(z,y)
	= \frac{\Gamma(z)\Gamma(y)}{\Gamma(z+y)}
	= \int_0^1 dt \, t^{z-1} (1-t)^{y-1} \, .
	\eeq
	Thus
	\bea
	\Omega_{p>0,D}(1)
	&=&
	\frac{1}{p}  \Omega_{D-1}( 1 )
	\frac{\Gamma(1+ (D-1)/p)\Gamma(1/p)}{\Gamma(1+(D/p))} \, .
\eea
We have that $\Omega_{D=1}(R) = \int_0^R dx_1 = R$, $\Omega_1(1)=1$, so that iterating we have that
\bea
	\Omega_{p>0,D}(R)
	&=&
	R^D \frac{\big( \Gamma(1+(1/p)) \big)^{D}} {\Gamma(1+(D/p))}
	\label{e:Omegadresult}
\eea

For instance, in $D=2$ dimensions we have that
\bea
	\Omega_{p>0,D=2}(R)
	&=&
	R^2 \frac{\big( \Gamma(1+(1/p)) \big)^{2}} {\Gamma(1+(2/p))}
	\label{e:Omegadresultd2}
\eea
which gives (using $\Gamma(z+1) = z \Gamma(z)$, $\Gamma(1) = 1$, $\Gamma(1/2) = \sqrt{\pi}$)
\beq
	\Omega_{p=1,D=2}(R) =  \half R^2 \, , \quad
	\Omega_{p=2,D=2}(R) =  \frac{\pi}{4} R^2 \, , \quad
	\Omega_{p=1/2,D=2}(R) =  \frac{1}{6} R^2
	\label{e:Omegadresultd2pvar}
\eeq
as is well known for the first two cases.

\paragraph{Negative $p$.} For $p<0$, the volume of the forward connection kernel $\Omega_{p<0,D}(R)$ is infnite. However, we are interested in the expected degree in intervals. So a measure of that is to look at the expected out-degree of the source node $s$ of our unit hypercubes $[0,1]^D$:
\begin{equation}
    \kout_s
    =
    \Nbox
    \int_{[0,1]^D}
    d^{D}\xvec \; 
    \Theta \Big( R^p - \sum_{i=1}^{D} z_i^p \Big)
    \, .
    \label{e:expected-out-degree-negative-p}
\end{equation}
Here $\Theta(x)$ the the Heaviside function which is one (zero) for  positive (negative) $x$. 

In our work we are interetested in the case where $p=-1$ and $D=2$. For these values, the boundary of the allowed region satisfies
\begin{equation}
    \frac{1}{x_1}+\frac{1}{x_2}=\frac{1}{R}
    \quad \Leftrightarrow \quad
    x_2=\frac{Rx_1}{x_1-R} \, .
\end{equation}
This curve intersects the upper boundary $x_2=1$ of our unit square $[0,1]^2$ at
$x_1=R/(1-R)$, which lies inside the unit square only for
$R\leq 1/2$. Hence, for $0<R<1/2$, the allowed area gives us
\begin{align}
    \frac{\kout_s}{\Nbox}
    &=
    \int_0^{R/(1-R)} dx_1
    +
    \int_{R/(1-R)}^1 dx_1\, 
    \frac{Rx_1}{x_1-R}\,  \nonumber\\
    &=
    2R
    +
    2R^2
    \ln\!\left(\frac{1-R}{R}\right).
\end{align}
For $R\geq 1/2$, the whole unit square satisfies the cutoff and
$\Omega_{-1,2}(R)=1$. Therefore
\begin{equation}
    \tseexpect{\kout}
    =
    \Nbox
    \begin{cases}
        2R+2R^2\ln\!\left(\dfrac{1-R}{R}\right),
        & 0<R<\dfrac{1}{2},\\[1.2ex]
        1,
        & R\geq\dfrac{1}{2},
    \end{cases}
    \qquad
    p=-1,\quad D=2.
    \label{e:expected-out-degree-pminusone}
\end{equation}